	\theoremstyle{plain}
	\newtheorem{assumption}{Assumption}
	\newtheorem{lemma}{Lemma}
	\newtheorem{remark}{Remark}
\DeclarePairedDelimiter\floor{\lfloor}{\rfloor}
\tikzset{radiation/.style={{decorate,decoration={expanding waves,angle=90,segment length=4pt}}}}
\definecolor{myred}{RGB}{220,43,25}
\definecolor{mygreen}{RGB}{0,146,64}
\definecolor{myblue}{RGB}{0,143,224}
\definecolor{mygray}{gray}{0.90}
\tikzset{
	myshape/.style={
		rectangle split,
		minimum height=0.2cm,
		rectangle split horizontal,
		rectangle split parts=3, 
		draw, 
		anchor=center,
	}
	
	queuei/.pic={
		\draw[line width=1pt]
		(0,0) -- ++(2cm,0) -- ++(0,-1cm) -- ++(-2cm,0);
		\foreach \Val in {1,...,3}
		\draw ([xshift=-\Val*10pt]2cm,0) -- ++(0,-1cm);
		\node[above] at (1cm,0) {Queue $#1$ $w_{#1}$};   
	},
	mytri/.style={
		draw,
		shape=isosceles triangle,
		isosceles triangle apex angle=60,
		inner xsep=0.65cm
	}
}
\newif\ifcomments
\newcommand{\setOfULResources}{\mathcal{R}^{\text{UL}}}
\newcommand{\setOfDLResources}{\mathcal{R}^{\text{DL}}}
\newcommand{\numOfULResources}{\mathsf{R^{\text{UL}}}}
\newcommand{\numOfDLResources}{\mathsf{R^{\text{DL}}}}
\newcommand{\ulResourceConsumption}{r_i^{\text{UL}}}
\newcommand{\dlResourceConsumption}{r_i^{\text{DL}}}
\newcommand{\numLoops}{N}
\newcommand{\bs}{\text{BS}}
\newcommand{\avgAge}{\overline{\Delta}}
\newcommand{\iae}{\Sigma_{e}}
\newcommand{\samplingPeriod}{T^s_i}
\newcommand{\E}{\mathbb{E}}
\newcommand{\tr}{\mathsf{tr}}
\newcommand{\ki}{k_i}
\newcommand{\sensor}{\mathcal{S}_i}
\newcommand{\controller}{\mathcal{C}_i}
\newcommand{\errornorm}{\E\left[\left\Vert e_i[\ki] \right\Vert^2 \right]}
\begin{document}
\copyrightyear{2019}
\acmYear{2019}
\setcopyright{acmcopyright}
\acmConference[ICCPS '19]{10th ACM/IEEE International Conference on Cyber-Physical Systems (with CPS-IoT Week 2019)}{April 16--18, 2019}{Montreal, QC, Canada}
\acmBooktitle{10th ACM/IEEE International Conference on Cyber-Physical Systems (with CPS-IoT Week 2019) (ICCPS '19), April 16--18, 2019, Montreal, QC, Canada}
\acmPrice{15.00}
\acmDOI{10.1145/3302509.3311050}
\acmISBN{978-1-4503-6285-6/19/04}

\title{Age-of-Information vs. Value-of-Information Scheduling for \\ Cellular Networked Control Systems}

\author{Onur Ayan}
\orcid{1234-5678-9012}
\affiliation{%
  \institution{Chair of Communication Networks}
  \institution{Technical University of Munich}
}
\email{onur.ayan@tum.de}

\author{Mikhail Vilgelm}
\orcid{1234-5678-9012}
\affiliation{%
	\institution{Chair of Communication Networks}
	\institution{Technical University of Munich}
}
\email{mikhail.vilgelm@tum.de}

\author{Markus Kl\"ugel}
\orcid{1234-5678-9012}
\affiliation{%
	\institution{Chair of Communication Networks}
	\institution{Technical University of Munich}
}
\email{markus.kluegel@tum.de}

\author{Sandra Hirche}
\orcid{1234-5678-9012}
\affiliation{%
	\institution{Chair of Information-oriented Control}
	\institution{Technical University of Munich}
}
\email{hirche@tum.de}

\author{Wolfgang Kellerer}
\orcid{1234-5678-9012}
\affiliation{%
	\institution{Chair of Communication Networks}
	\institution{Technical University of Munich}
}
\email{wolfgang.kellerer@tum.de}

\renewcommand{\shortauthors}{O. Ayan et al.}

\begin{abstract}
Age-of-Information (AoI) is a recently introduced metric for network operation with sensor applications which quantifies the freshness of data. In the context of networked control systems (NCSs), we compare the worth of the AoI metric with the value-of-information (VoI) metric, which is related to the uncertainty reduction in stochastic processes. First, we show that the uncertainty propagates non-linearly over time depending on system dynamics. Next, we define the value of a new update of the process of interest as a function of AoI and system parameters of the NCSs. We use the aggregated update value as a utility for the centralized scheduling problem in a cellular NCS composed of multiple heterogeneous control loops. By conducting a simulative analysis, we show that prioritizing transmissions with higher VoI improves performance of the NCSs compared with providing fair data freshness to all sub-systems equally.
\end{abstract}

 \begin{CCSXML}
	<ccs2012>
	<concept>
	<concept_id>10003033.10003039.10003056</concept_id>
	<concept_desc>Networks~Cross-layer protocols</concept_desc>
	<concept_significance>500</concept_significance>
	</concept>
	<concept>
	<concept_id>10003033.10003106.10003112</concept_id>
	<concept_desc>Networks~Cyber-physical networks</concept_desc>
	<concept_significance>500</concept_significance>
	</concept>
	</ccs2012>
\end{CCSXML}

\ccsdesc[500]{Networks~Cross-layer protocols}
\ccsdesc[500]{Networks~Cyber-physical networks}

\keywords{Networked Control Systems, Cyber-Physical Networking, Age-of-Information, Value-of-Information}

\maketitle

\section{Introduction}
Industrial applications form a major driving use case for 5G wireless research. Connectivity within industrial facilities is expected to enable a multitude of novel applications, including remote monitoring, control, and tele-robotics. Most considered scenarios fall into the framework of \textit{networked control systems} (NCSs), where an underlying control loop is closed over a communication medium. Due to the different performance metrics of NCSs compared with traditional network systems, the networking policies need to be adapted not to degrade performance. The wireless communication medium is particularly constrained in spectrum and prone to interference effects, which motivates the problem of prioritization and efficient scheduling of NCSs.

5G cellular networks are envisioned to support machine-type communications (MTC) or machine-to-machine communications (M2M) \cite{Shariatmadari2015}. They refer to a wide spectrum of applications where data communications occur between two or more mobile devices. Process automation, energy grids, healthcare and smart houses are some prominent use cases of M2M / MTC in 5G cellular networks. It is obvious that each of these applications requires different treatment from the communication system point of view due to its distinct features and requirements. Thus, tailoring the communication solutions to the underlying MTC applications can lead to more efficient and reliable services.

Scheduling for NCSs has raised significant interest from a control perspective, where it has been related to time-triggered and event-triggered control. Here, commonly constraints on the available resources (e.g., data rates) are considered in expectation and optimization metrics target the steady-state behavior of an NCS~\cite{5400528,6882817}. While providing optimal stationary policies under certain assumptions, network behavior is often assumed control-agnostic and is abstracted. However, the varying nature of wireless channels, the trade-offs among different control loops, and the coexistence of multiple traffic types in a network in general suggest that gains can be achieved by considering control metrics for network design. In NCS scenarios, it has been shown beneficial to use additional cross-layer metrics for scheduling \cite{mamduhi2017error, vasconcelos2017optimal, hsu2017age}. In particular, two performance metrics raise our interest. Age-of-Information (AoI) is a recently introduced metric for network operation with sensor applications \cite{kaul2012real}. It is a measure of information freshness from the application layer perspective and is applicable for any NCS scenario where there is an uncertainty in the information of interest such as industrial automation or a smart building. Value-of-Information (VoI) quantifies the amount of reduction in the uncertainty of a stochastic process at the recipient. It stems from information theory, originated by Claude E. Shannon in the late 1940s \cite{shannon1948}. While the VoI deals with the content of a new update independently of its timeliness, AoI considers only the timeliness independent of its content. Therefore, age may not be a standalone metric when it comes to monitoring and control of heterogeneous applications sharing the same network. Hence, comparing age and value, we ask the question which of these is more suitable to use in an NCS context.
  
\subsection{Contributions and Outline}
In this paper, we investigate the worth of the AoI and VoI metric for NCSs. We consider a scenario where multiple heterogeneous stochastic control systems are closed over a resource constrained two-hop communication network. Medium access is coordinated by a centralized scheduler that determines which subset of loops are allowed to communicate their up-to-date state information. The deviation of the real state from the augmented state on the receiver, i.e., controller, is considered as performance metric that is also related to the uncertainty in control. 

In this set-up, we are able to show that VoI can be interpreted as a function of the AoI. By designing one scheduler for AoI and one for VoI and conducting a simulative analysis, we show that prioritizing more valuable information leads to lower uncertainty, thus better control performance, than keeping the information at the recipient fresh.

The remainder of the paper is organized as follows. In Section~\ref{sec:scenario}, we introduce the considered scenario and present models for networking and control. Next, we define AoI and VoI in terms of system parameters. Section \ref{sec:Scheduling} presents two scheduling algorithms employing AoI and VoI of each loop as a decision metric. In Section \ref{sec:results}, we illustrate and discuss the results of our simulative study. Section \ref{sec:relatedwork} reviews the related work and Section \ref{sec:conclusions} concludes the paper.

\subsection{Notations}
Throughout this paper $v^T$ and $M^T$ stand for the transpose of a vector $v$ and a matrix $M$, respectively. $\tr(.)$ is the trace operator. The expected value of a random variable $X$ is denoted by $\E\left[ X \right] $. $\left\Vert v \right\Vert$ indicates the euclidean norm of vector $v$ with $\left\Vert v \right\Vert = \sqrt{v^T v}$. The normal distribution with mean $\mu$ and standard deviation $\sigma$ is denoted by $\mathcal{N}(\mu, \sigma^2)$. Additionally, $U(a,b)$ is the uniform distribution with minimum and maximum values $a$ and $b$.

\begin{figure} [t]
	\centering
	\resizebox {\linewidth} {!}
	{	\begin{tikzpicture}[>=latex]
	
	\coordinate (q1) at (0, -0.25); 
	\draw[pattern=north west lines, pattern color=myred] (1cm,0) rectangle (1cm -10pt, -0.5);
	\draw (0,0) -- ++(1cm,0) -- ++(0,-0.5cm) -- ++(-1cm,0);
	\foreach \i in {1,...,2}
	\draw (1cm-\i*10pt,0) -- +(0,-0.5cm);
	
	\node[align=center] at (0.5, -0.8) {\Large $\mathbf{\dots}$};	
	
	\coordinate (q2) at (0, -1.25); 
	\draw[pattern=north east lines, pattern color=mygreen] (1cm,-1) rectangle (1cm -10pt, -1.5);
	\draw (0,-1) -- ++(1cm,0) -- ++(0,-0.5cm) -- ++(-1cm,0);
	\foreach \i in {1,...,2}
	\draw (1cm-\i*10pt,-1) -- +(0,-0.5cm);
	
	\node[align=center] at (0.5, -1.8) {\Large $\mathbf{\dots}$};	
	
	\coordinate (q3) at (0, -2.25); 
	\draw[pattern=crosshatch dots, pattern color=myblue] (1cm,-2) rectangle (1cm -10pt, -2.5);
	\draw (0,-2) -- ++(1cm,0) -- ++(0,-0.5cm) -- ++(-1cm,0);
	\foreach \i in {1,...,2}
	\draw (1cm-\i*10pt,-2) -- +(0,-0.5cm);
	
	\node[draw,circle,inner sep=0.1cm] at (1.5, -0.25)
	(sensor1) 
	{$\mathcal{S}_1$};
	
	\node[draw,circle,inner sep=0.1cm] at (1.5, -1.25)
	(sensor2) 
	{$\mathcal{S}_i$};
	
	\node[draw,circle, inner sep=0.055cm] at (1.5, -2.25)
	(sensori) 
	{$\mathcal{S}_N$};
	
	\node[draw, circle, align=center] at (3.5, 0.2)
	(node1) 
	{$\mathsf{BS}$};
	
	\node[align=center, inner sep=0.1cm] at (3.5, -.45)
	(sched) 
	{\small Scheduler};
	\draw[align=center] (2.8, -2) rectangle (4.2,0.85);

	\draw[pattern=north west lines, pattern color=myred] (4cm,-1) rectangle (4cm -10pt, -1.5);
	
	\draw[pattern=north east lines, pattern color=mygreen] (4cm-10pt,-1) rectangle (4cm -20pt, -1.5);
	
	\draw[pattern=crosshatch dots, pattern color=myblue] (4cm-20pt,-1) rectangle (4cm -30pt, -1.5);
	
	\node[draw,circle, inner sep=0.1cm] at (5.5, -0.25)
	(controller1) 
	{$\mathcal{C}_1$};
	\node[draw,circle, inner sep=0.1cm] at (5.5, -1.25)
	(controller2) 
	{$\mathcal{C}_i$};
	\node[draw,circle, inner sep=0.055cm] at (5.5, -2.25)
	(controlleri) 
	{$\mathcal{C}_N$};
	
	\node[draw,rectangle, inner sep=0.15cm] at (7.25, -0.25)
	(plant1) 
	{$\mathcal{P}_1$};
	\node[draw,rectangle, inner sep=0.15cm] at (7.25, -1.25)
	(plant2) 
	{$\mathcal{P}_i$};
	\node[draw,rectangle, inner sep=0.10cm] at (7.25, -2.25)
	(planti) 
	{$\mathcal{P}_N$};
	
	\node[align=center] at (7.25, -0.8) {\Large $\mathbf{\dots}$};		
	\node[align=center] at (7.25, -1.8) {\Large $\mathbf{\dots}$};	
	
	\draw[line width=1.0pt, myred,->] (-0.5, -0.25) -- (q1) 
	node[color =black, near start,left]{$x_1[k_1]$};
	
	\draw[line width=1.0pt, mygreen,->] (-0.5, -1.25) -- (q2) 
	node[color =black, near start, left]{$x_i[k_i]$};
	
	\draw[line width=1.0pt, myblue,->] (-0.5, -2.25) -- (q3) 
	node[color =black, near start, left]{$x_N[k_N]$};
	
	\draw[->,myred, dashed] (sensor1.east) --  (2.95, -1.15);
	\draw[->,mygreen, dashed] (sensor2.east) --  (3, -1.25);
	\draw[->,myblue, dashed] (sensori.east) --  (2.95, -1.35);
	
	\draw[->,myred, dashed] (4.00, -1.25) --  (controller1.west);
	\draw[->,mygreen, dashed] (4, -1.25) --  (controller2.west);
	\draw[->,myblue, dashed] (4.00, -1.25) --  (controlleri.west);
	
	\draw[->,black] (controller1.east) --  (plant1.west) node[midway, above]{$u_1\left[k_1\right]$};
	\draw[->,black] (controller2.east) --  (plant2.west) node[midway, above]{$u_i\left[k_i\right]$};
	\draw[->,black] (controlleri.east) --  (planti.west) node[midway, above]{$u_N\left[k_N\right]$};
	
	\draw[dashed] (2.15,-1.05) ellipse (0.25cm and 1.8cm);
	\draw[dashed] (4.7,-1.05) ellipse (0.25cm and 1.8cm);
	\node[align=center] at (1.7, 1)	{Uplink\\(UL)};
	\node[align=center] at (5.2, 1)	{Downlink\\(DL)};
			
	\end{tikzpicture}}
	\caption{Scenario: Cellular networked control system with $N$ sub-systems. $\bs$ receives the data from sensors via uplink (UL), and forwards it to the respective controllers via downlink (DL). The scheduler for both UL and DL hops is centralized and located at the $\bs$.}
	\label{fig:scenario}
\end{figure}
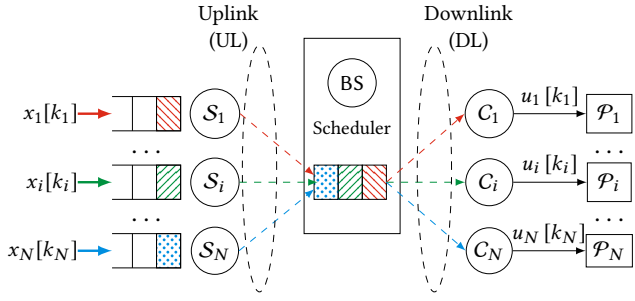

\section{Scenario and Problem Statement}
\label{sec:scenario}
Consider a networked control system shared by $N$ independent, linear time-invariant (LTI) control sub-systems with periodic sampling (see Figure~\ref{fig:scenario}). Each individual sub-system $i$ consists of a plant $\mathcal{P}_i$, a sensor $\mathcal{S}_i$, and a controller $\mathcal{C}_i$. We assume each controller-plant pair to be co-located and hence connected through an ideal controller-to-plant link while the sensor is operating remotely. This is a typical scenario for applications like industrial networked robotics, smart grids or automated highways systems \cite{kheirkhah2015networked, seiler2001analysis}, where the controller observes the plant via remotely deployed sensors or cameras.

\subsection{Network Model}
We assume a cellular network in which every sensor $\sensor$ and controller $\mathcal{C}_i$ are connected to the same base station (BS). Every $\sensor$ transmits observed state information in form of packets in the uplink (UL) direction towards the BS, from which it is forwarded in the downlink (DL) direction to the corresponding controller $\controller$, as shown in Figure \ref{fig:scenario}. The smallest time unit in the system is a transmission slot of unit length, which is indexed by $t \in \mathbb{N}$ in the following. 

Information packets are generated periodically at each sensor, which stores the \textit{latest generated packet} until it is allowed to transmit. If a newer packet is generated while the previous has not yet been transmitted, the sensor replaces older packet with the newer one~\cite{costa2014age}. A centralized dynamic scheduler located at the BS, schedules transmissions on UL, stores the received packets and forwards them on the DL. Again, the BS drops outdated packets and replaces them with newer ones, if received. The scheduling decision vectors on the UL and DL for each time slot are denoted by $\pi^{\text{UL}}(t), \, \pi^{\text{DL}}(t) \in \{0,1\}^N$, where a value of $\pi_i^{\text{UL/DL}}(t) = 1$ indicates that a packet of sub-system $i$ is transmitted over the respective link. We assume that when scheduled, transmissions are received without packet loss at the end of the transmission slot.

As illustrated in Figure \ref{fig:Resource Grid}, uplink and downlink transmissions take place within a time-frequency resource grid. The sets of uplink and downlink resources, $\setOfULResources$ and $\setOfULResources$ are separated in a Frequency-Division-Duplexing (FDD) manner. Formally, $\setOfULResources \cap \setOfDLResources = \emptyset$,  and finite, i.e., $\left|\setOfULResources\right| = \numOfULResources < \infty$, $\left|\setOfDLResources\right| = \numOfDLResources < \infty$. Therefore, the maximum number of simultaneous uplink and downlink transmissions is limited. Throughout the paper, it is assumed that each UL and DL transmission consumes one resource.

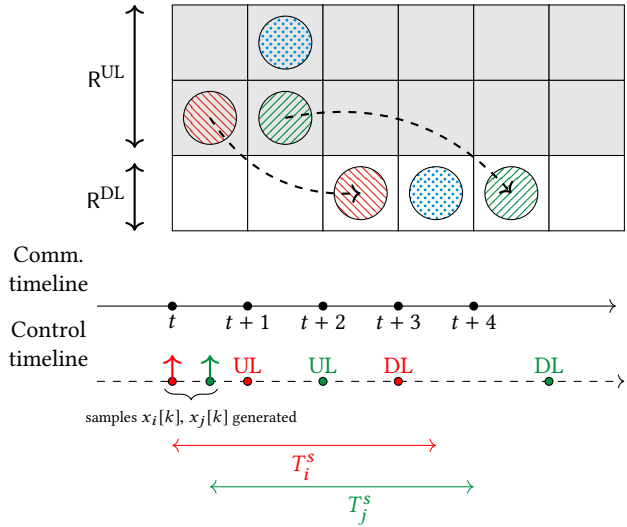
\begin{figure} [t]
	\centering
	\resizebox {\linewidth} {!}
    {\begin{tikzpicture}
\foreach \x in {0,...,5} 
	\foreach \y in {0,...,1}
		\filldraw[fill=mygray, draw=black] (\x, \y) rectangle (\x + 1, \y + 1);
		
\foreach \x in {0,...,5} 
	\foreach \y in {-1,...,-1}
		\filldraw[fill=White, draw=black] (\x, \y) rectangle (\x + 1, \y + 1);

%


\draw [<->, thick] (-0.5,0.1) -- node[align=left, left] {$\numOfULResources$} (-.5, 2);
\draw [<->, thick] (-0.5,-0.1) -- node[align=left, left] {$\numOfDLResources$} (-.5, -1);

\draw[pattern=north west lines, pattern color=myred] (0.5,0.5) circle (10pt) node[align=left,   below] (redUL) {} ;
\draw[pattern=north east lines, pattern color=mygreen] (1.5,0.5) circle (10pt) node[align=left,   below] (greenUL) {};
\draw[pattern=crosshatch dots, pattern color=myblue] (1.5,1.5) circle (10pt) node[align=left,   below] {};

\draw[pattern=north west lines, pattern color=myred] (2.5,-0.5) circle (10pt) node[align=left,   below] (redDL) {} ;
\draw[pattern=north east lines, pattern color=mygreen] (4.5,-0.5) circle (10pt) node[align=left,   below] (greenDL) {};
\draw[pattern=crosshatch dots, pattern color=myblue] (3.5,-0.5) circle (10pt) node[align=left,   below] {};

\draw[->, dashed, black, thick] (redUL.north)  to[bend right] (redDL.north);
\draw[->, dashed, black, thick] (greenUL.north)  to[bend left] (greenDL.north);


%
	

\newcommand{\taxisY}{-2}
\node[left, black, align=center] (t) at (-1, \taxisY+0.5){Comm.\\timeline};

\node[](tAxis) at (6, \taxisY) {};
\coordinate[] (torigin) at (-1, \taxisY);
\draw[->] (torigin) -- (tAxis);

\draw[black,fill=black] (0,\taxisY) circle (1.5pt);
\draw[black,fill=black] (1,\taxisY) circle (1.5pt);
\draw[black,fill=black] (2,\taxisY) circle (1.5pt);
\draw[black,fill=black] (3,\taxisY) circle (1.5pt);
\draw[black,fill=black] (4,\taxisY) circle (1.5pt);
\node[below] (t) at (0, \taxisY){$t$};
\node[below] (t-1) at (1, \taxisY){$t+1$};
\node[below] (t-1) at (2, \taxisY){$t+2$};
\node[below] (t-1) at (3, \taxisY){$t+3$};
\node[below] (t) at (4, \taxisY){$t+4$};


\newcommand{\caxisY}{-3}
\draw[->, dashed, black] (-1, \caxisY) -- (6,\caxisY);
\node[left, black, align=center] (t) at (-1, \caxisY+0.5){Control\\timeline};

\draw[<->, red] (0, \caxisY-0.85) -- (3.5,\caxisY-0.85)
node[below, red, align=center] (t) at (1.75, \caxisY-0.85){$T_i^{s}$};

\draw[black,fill=red] (0,\caxisY) circle (1.5pt);
\draw[->, thick, red] (0, \caxisY) -- (0,\caxisY+.35);

\draw[black,fill=red] (1,\caxisY) circle (1.5pt);
\node[above, red, align=center] (t) at (1, \caxisY){UL};

\draw[black,fill=red] (3,\caxisY) circle (1.5pt);
\node[above, red, align=center] (t) at (3, \caxisY){DL};


\draw [decorate,decoration={brace,amplitude=4pt,mirror}] (-0.1,\caxisY-0.1) -- (0.6,\caxisY-0.1) 
node [black,midway, yshift=-0.4cm,align=center] {\scriptsize samples $x_i[k],x_j[k]$ generated};

\draw[<->, mygreen] (0.5, \caxisY-1.4) -- (4,\caxisY-1.4)
node[below, mygreen, align=center] (t) at (2.5, \caxisY-1.4){$T_j^{s}$};
\draw[black,fill=mygreen] (0.5,\caxisY) circle (1.5pt);
\draw[->, thick, mygreen] (0.5, \caxisY) -- (0.5,\caxisY+.35);

\draw[black,fill=mygreen] (2,\caxisY) circle (1.5pt);
\node[above, mygreen, align=center] (t) at (2, \caxisY){UL};

\draw[black,fill=mygreen] (5,\caxisY) circle (1.5pt);
\node[above, mygreen, align=center] (t) at (5, \caxisY){DL};

%
%
%
%


%
%

\end{tikzpicture}}
	\caption{Illustration of the resource grid with communication and control timelines, and with exemplary procedure for sub-system $i$ (red) and $j$ (green). Packet generated by sub-system $i$ is received by the BS at time $(t+1)$, and received by its controller at time $(t+3)$. The packet arrives within the sampling period of the sub-system $T_i^s$, therefore, it is not delayed from the perspective of the sub-system (case 1 in Eqn.~\eqref{eq:Cestimation}). On the contrary, the packet of sub-system $j$ is experiencing delay larger that $T_j^s$, therefore, the packet is considered delayed (case 2 in Eqn.~\eqref{eq:Cestimation}).}
	\label{fig:Resource Grid}
\end{figure}

\subsection{Control Model}
We consider the behavior of the $i$-th sub-system is represented by the following LTI model in discrete time:
\begin{equation}
\label{eq:statespace}
x_i[\ki + 1] = A_i x_i[\ki] + B_i u_i[\ki] + w_i[\ki] 
\end{equation}
with time-step $\ki$, system state $x_i \in \mathbb{R}^{n_i}$, state matrix $A_i \in \mathbb{R}^{n_i \times n_i}$, input matrix $B_i \in \mathbb{R}^{n_i \times m_i}$ and control input $u_i \in \mathbb{R}^{m_i}$. The noise sequence $w_i \in \mathbb{R}^{n_i}$ is considered to be independent and identically distributed (i.i.d) according to a zero-mean Gaussian distribution with diagonal covariance matrix $W_i$. The system state $x_i[\ki]$ with $x_i[0] = w_i[0]$ is measurable by $\sensor$. Each sub-system $i$ generates packets periodically every $T_i^s$ transmission slots with $T_i^s \in \mathbb{N}^+$, where the initial generation happens at slot $T_i^o\sim U(0, \samplingPeriod)$, which is a uniformly distributed random variable. As a consequence, the sub-systems may operate in a non-synchronized fashion, as well as with different update rates. However, we assume that they do not operate faster than the network. The mapping of transmission slots $t$ to sub-system steps $\ki$ becomes:
\begin{equation}
\label{eq:tkmapping}
	\ki(t) = \floor*{\dfrac{t - T_i^o}{\samplingPeriod}}.
\end{equation}

We introduce a variable $\delta_i[\ki] \in \{0, 1\}$ as an indicator of packet reception by the controller. I.e., $\delta_i[\ki] = 1$ if $x_i[\ki]$ is received by $\mathcal{C}_i$ and $\delta_i[\ki] = 0$ if $x_i[\ki]$ is dropped or still waits for transmission at the sensor or BS. The state of a sub-system, as observed by the controller $\mathcal{C}_i$, is given by: 

\begin{equation}
\label{eq:observedState}
z_i[\ki]=
\begin{cases}
x_i[\ki] & \text{, if } \delta_i[\ki] = 1\\
\emptyset & \text{, if } \delta_i[\ki] = 0.
\end{cases}
\end{equation}
Note that due to resource constraints on the downlink, observation of state $x_i[\ki]$ can occur as recent as multiple sampling periods after its generation. Thus, $\controller$ knows the current state of the process only if $\delta_i[\ki(t)] = 1$. 

In order to compensate for packet drops or delays caused by the network, each controller $\controller$ employs a Kalman-like state estimator as in \cite{li2016wireless, sinopoli2004time}. The state estimation is based on the following assumptions:

\begin{assumption}\label{as:1}
	The controller $\controller$ is aware of the system parameters $A_i$, $B_i$ and $W_i$. 
\end{assumption}

\begin{assumption}$T_i^o$ and $\samplingPeriod$ and $t$ are known by the controller.\label{as:2}
\end{assumption}
Assumption \ref{as:1} is motivated by the time-invariant nature of the sub-systems' dynamics. Combined with periodic arrival of samples, Assumption \ref{as:2} implies that $\controller$ is able to map any $t$ to $\ki$ by using Eqn.~\eqref{eq:tkmapping}. Hence, the estimated state $\hat{x}[\ki]$ on the controller side is:
\begin{equation}
\hat{x}_i[\ki] = \E\left[ x_i[\ki]~\big|~\mathcal{I}_i[\ki] \right]
\end{equation}
with the information set $\mathcal{I}_{i}[k]$ available at $\controller$ as follows:
\begin{equation}
\mathcal{I}_{i}[\ki] = \{ \ki, \, z_i[0],\, \dots ,\, z_i[\ki], \, u_i[0], \, \dots , \, u_i[\ki - 1]\}
\end{equation}
Since we are dealing with LTI systems, we assume a stationary control law for each loop: 
\begin{equation}
u_i[k_i] = - L_i \hat{x}_i[k_i]
\end{equation}
where $L_i \in \mathbb{R}^{m_i \times n_i}$ is the state-feedback gain matrix. The scheduler is assumed to be control-aware based on the following assumptions:
\begin{assumption}
	\label{as:3}
	The scheduler at the $\bs$ observes the content of any packet it receives on the UL.
\end{assumption}
\begin{assumption}
	\label{as:4}
	The scheduler is aware of system parameters $A_i$, $W_i$, $B_i$, $L_i$, $T_i^s$, $T_i^o$  $\forall i$.
\end{assumption}
Assumptions \ref{as:3} and \ref{as:4} together enable the scheduler to retain an information set $\mathcal{I}_i^B[k_i]$ as:
\begin{equation}
\mathcal{I}^B_{i}[\ki] = \lbrace \ki, \, z_i^B[0],\, \dots ,\, z_i^B[\ki], \, u_i[0], \, \dots , \, u_i[\ki - 1]\rbrace
\end{equation}
with $z_i^B[\ki]$ depending on a reception variable $\delta_i^B[\ki]$ defined analog to $z_i[\ki]$ and $\delta_i[\ki]$. Because the BS receives data before the controller does, $\delta_i^B[\ki]\geq\delta_i[\ki]$, leading to $\mathcal{I}^B_{i}[k_i] \supseteq \mathcal{I}_i[\ki]$ $\forall i, \ki$. The estimation at the $\bs$ follows analog to that at the controller as:
\begin{equation}
\hat{x}^B_i[\ki] = \E\left[ x_i[\ki]~\big|~\mathcal{I}^B_i[\ki] \right].
\end{equation}
\subsection{Age-of-Information}
If we denote the most recent received observation by $z_i[s_i]$, with $s_i[\ki] = \sup \{ s \in \mathbb{N}: s \leq \ki, z_i[s] \not = \emptyset \}$ the latest control step from which a state has been received, the Age of Information $\Delta_i[\ki]$ at the controller $\controller$ follows as:
\begin{equation}
	\label{eq:AoIDynamics}
	\Delta_i[\ki]= \ki - s_i[\ki]
\end{equation} 
As can be seen, the AoI denotes the number of elapsed control steps since the acquisition of the latest received system state. In contrast to existing literature on AoI, in the given case  $\Delta_i[\ki]$ does not increase linearly with $t$ due to the step-wise mapping of $t$ to $\ki$ given in \eqref{eq:tkmapping}. On the other hand, we argue that AoI evolves linearly with respect to $\ki$ from control perspective. In any case, after each successful DL reception, $s_i$ is increased and the information set $I_i[\ki]$ is extended by $z_i[s_i]$. 

Similarly, if we denote the most recent non-empty observation by $z_i^B[m_i]$ with $m_i[\ki] = \sup \{ m \in \mathbb{N}: m \leq \ki, z_i^B[m] \not = \emptyset \}$, the age of the set $\mathcal{I}_i^B[\ki]$, that is $\Delta_i^B[\ki]$, follows as:
\begin{equation}
\label{eq:AoIBS}
\Delta_i^B[\ki]= \ki - m_i[\ki]
\end{equation} 
It is important to emphasize that $\Delta_i^B[\ki] \leq \Delta_i[\ki]$. In case of equality, i.e., $\mathcal{I}_i[\ki]= \mathcal{I}_i^B[\ki]$, then $\Delta_i^B[\ki] = \Delta_i[\ki]$ holds. To avoid visual clutter, we avoid defining further equations twice both for the $\bs$ and $\controller$. The superscript $(\cdot)^B$ assumes an analogue definition for the $\bs$ of a new introduced variable. In other words, one has to replace $\Delta_i$, $\hat{x}_i$, $z_i$, with $\Delta^B$, $\hat{x}_i^B$, $z_i^B$, respectively.

\subsection{Value-of-Information}
Because AoI is a variable defined in the units of control steps, we can use it to express control variables. 
\begin{lemma}
	\label{lem:Lemma1}
	Given the information set $\mathcal{I}_{i}[\ki]$ and the age-of-information $\Delta_i[\ki]$, the estimated plant state is determined by:
	\begin{equation}
	\label{eq:Cestimation}
	\hat{x}_i[\ki]=
	\begin{cases}
	x_i[\ki] & \text{, if } \Delta_i[\ki] = 0\\
	f(\Delta_i[\ki],~\mathcal{I}_i[\ki]) & \text{, if } \Delta_i[\ki] > 0
	\end{cases}
	\end{equation}
	with:
	\begin{equation}
	\label{eq:fFunction}
	f(\Delta_i[\ki],~\mathcal{I}_i[\ki]) \triangleq A_i^{\Delta_i[\ki]} z_i[s_i] + \sum_{q=1}^{\Delta_i[\ki]}A_i^{q-1} B_i u_i[\ki-q]
	\end{equation}
\end{lemma}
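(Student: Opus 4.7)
My plan is to treat the two cases separately, with the second case handled by unrolling the LTI recursion and exploiting the zero-mean noise.

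For the first case ($\Delta_i[\ki] = 0$), the definition of $\Delta_i$ gives $s_i[\ki] = \ki$, so by \eqref{eq:observedState} we have $z_i[\ki] = x_i[\ki] \in \mathcal{I}_i[\ki]$. Since the exact state is already in the information set, $\hat{x}_i[\ki] = \E[x_i[\ki] \mid \mathcal{I}_i[\ki]] = x_i[\ki]$ trivially.

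For the second case ($\Delta_i[\ki] > 0$), I would iterate the state equation \eqref{eq:statespace} from the latest received step $s_i$ up to the current step $\ki = s_i + \Delta_i[\ki]$. A straightforward induction on the number of steps yields
\begin{equation*}
x_i[\ki] = A_i^{\Delta_i[\ki]} x_i[s_i] + \sum_{q=1}^{\Delta_i[\ki]} A_i^{q-1} B_i u_i[\ki - q] + \sum_{q=1}^{\Delta_i[\ki]} A_i^{q-1} w_i[\ki - q].
\end{equation*}
I then take the conditional expectation with respect to $\mathcal{I}_i[\ki]$. The first term is deterministic given $\mathcal{I}_i[\ki]$ because $z_i[s_i] = x_i[s_i]$ is contained in $\mathcal{I}_i[\ki]$ by definition of $s_i$, yielding $A_i^{\Delta_i[\ki]} z_i[s_i]$. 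The control inputs $u_i[\ki-1], \dots, u_i[s_i]$ are also elements of $\mathcal{I}_i[\ki]$, so the middle sum passes through the expectation unchanged.

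The key substantive step is to argue that $\E[w_i[\ki - q] \mid \mathcal{I}_i[\ki]] = 0$ for every $q \in \{1, \dots, \Delta_i[\ki]\}$. This follows because $\mathcal{I}_i[\ki]$ depends on the noise history only through states $x_i[j]$ with $j \leq s_i$, and each such state is a function of $w_i[0], \dots, w_i[s_i - 1]$; together with the i.i.d.\ zero-mean assumption on $\{w_i[k]\}$, the noises $w_i[s_i], \dots, w_i[\ki - 1]$ are independent of $\mathcal{I}_i[\ki]$ and have zero mean. Hence the noise sum drops out, leaving precisely $f(\Delta_i[\ki], \mathcal{I}_i[\ki])$ as in \eqref{eq:fFunction}. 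The only subtlety I anticipate is verifying this independence claim cleanly, since the controls $u_i[s_i], \dots, u_i[\ki-1]$ are formed after step $s_i$; but because no new observation has been delivered in this interval, those controls are deterministic functions of $\mathcal{I}_i[s_i]$ and therefore do not introduce any dependence on the intervening noises.
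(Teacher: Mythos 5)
Your proof is correct and follows essentially the same route as the paper's: unroll the LTI recursion from the last received sample $z_i[s_i]$ up to step $\ki$, condition on $\mathcal{I}_i[\ki]$, and let the zero-mean noise terms vanish. You are in fact slightly more careful than the paper, which silently drops the noise sum in its final equality, whereas you explicitly justify $\E[w_i[\ki-q]\mid\mathcal{I}_i[\ki]]=0$ via independence of the post-$s_i$ noises from the information set.
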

\begin{proof}
	The proof is given in Appendix \ref{app:proof_lemma_1}.
\end{proof}
If $\Delta_i[\ki]$ is zero, it means that the controller has been provided the latest plant state. Otherwise, the current state estimate $\hat{x}_i[k]$ is recursively calculated from the most recent information received by the controller which is $z_i[s_i]$ as stated above. Thus, the estimation error induced by the network is defined as the difference between the true and estimated states as:
\begin{equation}
e_i[\ki] = x_i[\ki] - \hat{x}_i [\ki] \nonumber = \sum_{q=1}^{\Delta_i[\ki]}A_i^{q-1} w_i[\ki-q] 
\label{eq:nie}
\end{equation}

\begin{lemma}
	\label{lem:Lemma2}
	Given the age-of-information $\Delta_i[\ki]$, noise covariance matrix $W_i$, and system matrix $A_i$, the quadratic error norm can be estimated as follows:
	\begin{equation}
	\errornorm =
	\begin{cases}
	0 & \text{, if } \Delta_i[\ki] = 0\\
	g\left(\Delta_i[\ki]\right) & \text{, if } \Delta_i[\ki] > 0
	\end{cases},
	\label{eq:quadraticerrorcases}
	\end{equation}
	with:
	\begin{equation}
	\label{eq:quadraticerrornormestimation}
	g(\Delta_i[\ki]) \triangleq
	\sum_{r=0}^{\Delta_i[\ki]-1} \tr\left((A_i^T)^r A_i^r W_i\right).
	\end{equation} 
\end{lemma}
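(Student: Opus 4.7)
The plan is to compute $\mathbb{E}[\|e_i[k_i]\|^2]$ directly from the explicit expression for the estimation error $e_i[k_i] = \sum_{q=1}^{\Delta_i[k_i]} A_i^{q-1} w_i[k_i-q]$ derived just above the lemma from Lemma~\ref{lem:Lemma1}. The case $\Delta_i[k_i] = 0$ is immediate: the sum is empty so $e_i[k_i] = 0$ and hence $\mathbb{E}[\|e_i[k_i]\|^2] = 0$, matching the first branch.

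For the case $\Delta_i[k_i] > 0$, I would first rewrite $\|e_i[k_i]\|^2 = e_i[k_i]^T e_i[k_i]$ as a double sum
\begin{equation*}
\|e_i[k_i]\|^2 = \sum_{q=1}^{\Delta_i[k_i]} \sum_{p=1}^{\Delta_i[k_i]} w_i[k_i-q]^T (A_i^T)^{q-1} A_i^{p-1} w_i[k_i-p],
\end{equation*}
take expectations, and exploit the i.i.d.\ zero-mean property of $w_i$ to drop all cross terms ($q \neq p$), since $\mathbb{E}[w_i[k_i-q]^T M w_i[k_i-p]] = 0$ for $q \neq p$ by independence and zero mean. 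Only the diagonal terms $q = p$ survive.

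The remaining step is to evaluate each diagonal summand using the standard trace identity for a zero-mean random vector $w$ with covariance $W$, namely $\mathbb{E}[w^T M w] = \tr(MW)$. Applied to $M = (A_i^T)^{q-1} A_i^{q-1}$ and $W = W_i$, this produces
\begin{equation*}
\mathbb{E}\left[\|e_i[k_i]\|^2\right] = \sum_{q=1}^{\Delta_i[k_i]} \tr\bigl((A_i^T)^{q-1} A_i^{q-1} W_i\bigr),
\end{equation*}
and re-indexing with $r = q - 1$ yields exactly $g(\Delta_i[k_i])$ as claimed.

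There is no real obstacle here; the argument is a routine second-moment calculation. The only points requiring a little care are (i) justifying the cancellation of off-diagonal cross terms, which relies explicitly on the i.i.d.\ assumption on $\{w_i[\cdot]\}$ stated in the control model, and (ii) invoking the trace identity cleanly (equivalently, using $\tr(AB) = \tr(BA)$ and linearity of expectation). Both are standard and the proof should be short.
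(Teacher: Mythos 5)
Your proposal is correct and follows essentially the same route as the paper's proof in Appendix~B: expand $\E\left[e_i[\ki]^T e_i[\ki]\right]$ as a double sum, kill the cross terms using the i.i.d.\ zero-mean noise assumption, apply the trace identity $\E\left[w^T M w\right]=\tr(MW)$ to the surviving diagonal terms, and re-index. No gaps.
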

\begin{proof}
	The proof is given in Appendix \ref{app:proof_lemma_2}.
\end{proof}
Note that $g:\mathbb{N} \to \mathbb{R}$ is strictly increasing for any invertible $A_i$ and positive-definite noise covariance matrix $W_i$. Analogously, for $\Delta_i^B[\ki] > 0 $, we define the $\bs$ counterparts of $\hat{x}_i[\ki]$, $e_i[\ki]$ and $\E\left[\left\Vert e_i[\ki]\right\Vert^2 \right]$ as follows:
\begin{align}
&e_i^B[\ki] = x_i[\ki] - \hat{x}^B_i[\ki] \\
&\hat{x}^B_i[\ki] = f(\Delta_i^B[\ki],~\mathcal{I}^B_i[\ki]) \\
&\E\left[\left\Vert e_i^B[\ki]\right\Vert^2 \right] = g(\Delta_i^B[\ki]).
\end{align} 

\begin{figure}[t]
	\centering
	\includegraphics[width=\linewidth]{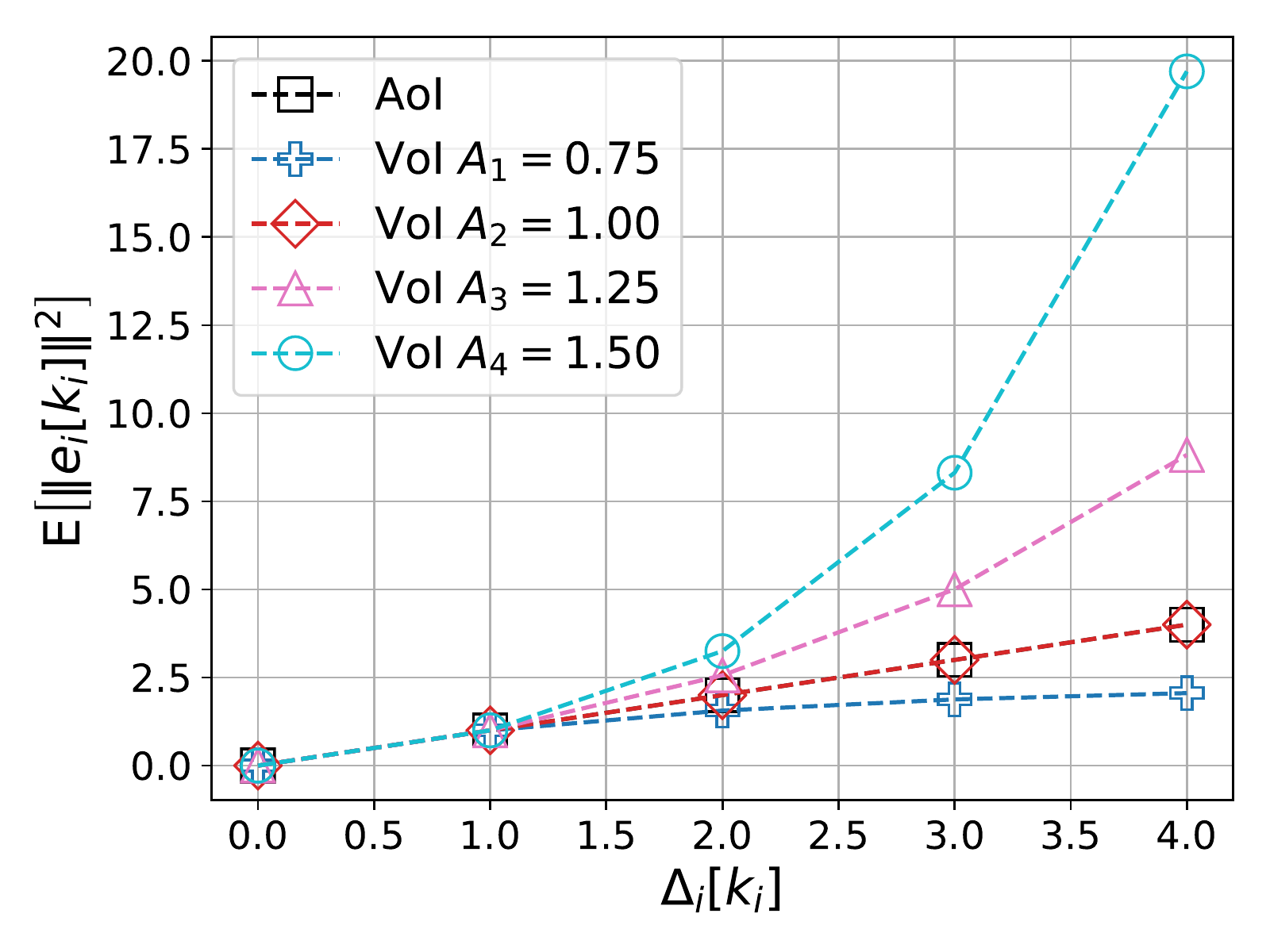}
	\caption{VoI defined as expectation of a quadratic estimation error norm, a function of AoI for an individual sub-system $i$ with different types of plants dynamic $A_i\in\{0.75,1,1.25,1.50\}$ (scalar system assumed for illustrative purposes).}
	\label{fig:voiNonHD}
\end{figure}

Figure \ref{fig:voiNonHD} shows the behavior of expected quadratic error norm, i.e., $\errornorm$ , as a functional of age. To that end, we have selected 4 type of scalar plants with $A_1 = 0.75$, $A_2 = 1.0$, $A_3 = 1.25$ and $A_4$ = 1.5 and kept the noise covariance matrix constant at $W = 1$ for all of them.  The black curve labeled with AoI corresponds the $\Delta_i[\ki]$ which is a line with slope 1. An interesting aspect is that for $A_1 = 0.75$, which is a stable system, the error converges to a finite value for infinitely large AoI. It can easily be shown by letting $\Delta_i \rightarrow \infty$ and applying convergence condition of power series on $g\left( \Delta_i \right)$ from Eqn.~(\ref{eq:quadraticerrornormestimation}).

We propose a link-based value-of-information metric, i.e., $v_i^{\text{UL/DL}}$ both for the UL and the DL. VoI is defined as a measure of uncertainty reduction from the information set of the receiver in case of a successful transmission. In case of an uplink packet, VoI is defined as:
\begin{align}
v_i^{\text{UL}}(t) &= \mathbb{E}\left[\left\Vert e_i^B[\ki] - e_i^S[\ki]\right\Vert^2 \right] \nonumber \\
&=  \mathbb{E}\left[\left\Vert e_i^B[\ki]\right\Vert^2 \right] 
\end{align}
with $\ki = \ki(t)$ as in Eqn. (\refeq{eq:tkmapping}). $e_i^S[\ki]$ is defined as the measurement error on the sensor side which is assumed to be zero throughout this paper, i.e., $e_i^S[\ki] = 0$. Similarly for a DL packet:
\begin{align}
v_i^{\text{DL}}(t) &= \E\left[\left\Vert e_i[\ki] - e_i^B[\ki]\right\Vert^2 \right] \nonumber \\
&=  \left\Vert\hat{x}^B_i[\ki] - \hat{x}_i[\ki] \right\Vert^2
\end{align}
Note that, $v_i^{DL}(t)$ does not contain any non-determinism due to Assumption \ref{as:3} and Assumption \ref{as:4}.
\section{Joint Scheduling Design}
\label{sec:Scheduling}
Due to resource constraints on both hops, centralized scheduler at the BS prioritizes sub-systems based on performance metrics. We define two schedulers utilizing AoI and VoI as metrics. They follow a joint design for uplink and downlink.

The fact that we assume equal channel qualities among loops allows us to distinguish between two cases: (i) Uplink is the bottleneck of the network, i.e., $\numOfULResources \leq \numOfDLResources$ and (ii) Downlink is the bottleneck, i.e., $\numOfDLResources \leq \numOfULResources$. In the first case, all uplink transmissions received by $\bs$ are forwarded as soon as the data reception is completed. Hence, the DL/UL scheduling problem can be reduced to a single-hop problem, where $\bs$ and the $\controller$ nodes are logically merged together. In the second case, downlink hop is limiting the network throughput, therefore, joint scheduling problem for both links must be considered.

\begin{remark}
We implicitly assume that every scheduled transmission is successful. To ensure this, cellular networks typically employ re-transmission techniques, e.g., hybrid automatic repeat request. We note that (heterogeneous) packet loss probability can be readily accommodated into the scheduler design by weighting respective AoI or VoI metrics by the expected packet success probability.
\end{remark}

\subsection{Age-of-Information Scheduler}
\label{subsec:AoI}
As the name suggests, AoI scheduler aims to prevents staleness of information sets at the controller side. The targeted problem is formalized as:

\begin{subequations}
\begin{align}
& \min\limits_{\pi^{\text{UL}}(t), \pi^{\text{DL}}(t)}
& &\limsup\limits_{T\to\infty}\frac{1}{T}\sum_{t=0}^{T-1} \sum_{i=1}^{N} \Delta_i(t)&\label{eq:aoiMin}\\
& \text{subject to}
& &\sum_{i=1}^{N} \pi_i^{\text{UL}}(t) \leq \numOfULResources,&\label{eq:aoiMin_ul_constr}\\
& & & \sum_{i=1}^{N} \pi_i^{\text{DL}}(t) \leq \numOfDLResources&\label{eq:aoiMin_dl_constr}
\end{align}
\end{subequations}

To solve the above problem, we can leverage results for single-hop AoI optimization \cite{Kadota2018}, according to which greedy scheduling is in fact age-optimal if all uplink transmissions have the same success probability. As this is true under our assumptions, where all transmissions are always successful, we can make use of the results to extend towards the two-hop case, which we do in the following Lemma:

\begin{lemma}
	Assume that $\numOfULResources < \numOfDLResources$ and that the sequence of uplink schedules $\lbrace \pi^{UL}(1),\pi^{UL}(2),...\rbrace$ is age-optimal for the uplink hop. Then, by creating a sequence of downlink schedules as $\pi^{\text{UL}}(t) = \pi^{\text{UL}}(t-1)$ $\forall t \geq 2$, the combination of uplink and downlink sequence is age-optimal for the two-hop case. Further, assuming  $\numOfULResources \geq \numOfDLResources$ and that the sequence of downlink schedules $\lbrace \pi^{\text{DL}}(2), \, \pi^{\text{DL}}(3), \, ...\rbrace$ is age-optimal for the downlink hop, we can create a sequence of uplink schedules as  $\pi^{\text{UL}}(t) = \pi^{\text{DL}}(t+1)$ such that the combination of both sequences is age-optimal.
\end{lemma}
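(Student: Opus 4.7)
The strategy is to reduce the joint two-hop scheduling problem to a single-hop problem on the bottleneck link and then invoke the single-hop age-optimality result of~\cite{Kadota2018}. In each of the two cases I would (i)~verify feasibility of the constructed schedule under \eqref{eq:aoiMin_ul_constr}--\eqref{eq:aoiMin_dl_constr}, (ii)~establish a structural identity relating the controller AoI to the AoI on the bottleneck hop up to a schedule-independent offset, and (iii)~match it with a causality lower bound.

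\textbf{Case~1 ($\numOfULResources < \numOfDLResources$).} Feasibility is immediate: with $\pi^{\text{DL}}(t)=\pi^{\text{UL}}(t-1)$, one has $\sum_i \pi^{\text{DL}}_i(t) = \sum_i \pi^{\text{UL}}_i(t-1) \leq \numOfULResources < \numOfDLResources$, so \eqref{eq:aoiMin_dl_constr} holds, while \eqref{eq:aoiMin_ul_constr} holds by the hypothesis on the UL sequence. Since every packet received by the $\bs$ at the end of slot $t{-}1$ is forwarded on DL at slot $t$, the controller's latest received control step obeys $s_i[\ki(t)] = s_i^B[\ki(t-1)]$, which yields
\[
\Delta_i[\ki(t)] \;=\; \Delta_i^B[\ki(t-1)] \;+\; \bigl(\ki(t)-\ki(t-1)\bigr).
\]
The last term is schedule-independent, so summing over $i$ and passing to the $\limsup$ in \eqref{eq:aoiMin} shows that the controller objective coincides with the analogous BS objective up to a constant. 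Because the UL sequence is, by hypothesis, age-optimal at the $\bs$, age-optimality transfers to the controller.

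\textbf{Case~2 ($\numOfULResources \geq \numOfDLResources$).} The argument is symmetric. With $\pi^{\text{UL}}(t)=\pi^{\text{DL}}(t+1)$, one has $\sum_i \pi^{\text{UL}}_i(t)=\sum_i \pi^{\text{DL}}_i(t+1) \leq \numOfDLResources \leq \numOfULResources$, so \eqref{eq:aoiMin_ul_constr} holds. Moreover, whenever loop $i$ is scheduled on DL at slot $t{+}1$, the $\bs$ already holds (from the UL transmission at slot $t$) the freshest sample available at that moment, so the DL hop behaves like a single-hop system whose input stream is at most one slot stale. The resulting controller-AoI objective equals the DL single-hop AoI objective plus a constant offset, and age-optimality transfers from the assumed DL sequence.

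\textbf{Expected main obstacle.} The delicate step is the converse, namely that no alternative joint schedule can do strictly better than the proposed construction. I would establish it through a causality lower bound: a packet cannot be present at $\controller$ by slot $t$ unless it was present at the $\bs$ by slot $t{-}1$, so in Case~1 the controller AoI is lower-bounded by the BS AoI one slot earlier, and, symmetrically, in Case~2 by the DL single-hop AoI obtained from an always-fresh source. Since the bottleneck-hop schedule is, by assumption, optimal for this single-hop lower-bound problem, the bound is tight and is attained by the construction, completing the proof.
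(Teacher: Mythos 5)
Your proof is correct and arrives at the same reduction as the paper---the non-bottleneck hop should simply mirror the bottleneck hop, after which the single-hop optimality result of~\cite{Kadota2018} applies---but you get there by a genuinely different argument. The paper uses an exchange argument: for any feasible joint schedule with $\pi^{\text{DL}}(t)\neq\pi^{\text{UL}}(t-1)$, replacing the downlink decision by $\tilde{\pi}^{\text{DL}}(t):=\pi^{\text{UL}}(t-1)$ weakly enlarges the controller information sets and hence weakly decreases every $\Delta_i$, so an optimal joint schedule may be assumed mirrored, and single-hop optimality finishes the job. You instead split the claim into achievability (the identity $\Delta_i[\ki(t)]=\Delta_i^B[\ki(t-1)]+(\ki(t)-\ki(t-1))$ under the construction) and a converse (the causality bound $\Delta_i[\ki(t)]\geq\Delta_i^B[\ki(t-1)]+(\ki(t)-\ki(t-1))$ for \emph{any} feasible schedule, with a schedule-independent offset). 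Your route is actually tighter on two points the paper glosses over: first, the paper's assertion that the swap enlarges \emph{all} information sets tacitly needs the inductive fact that under the always-forward policy the $\bs$ buffer never holds an unforwarded packet the controller lacks (otherwise a loop with $\pi_i^{\text{DL}}(t)=1$ but $\pi_i^{\text{UL}}(t-1)=0$ could lose information in the swap); second, the paper never explicitly closes the converse direction, which your lower bound supplies. The only loose ends on your side are minor: the $\bs$-side latest-reception index is denoted $m_i$ in the paper rather than $s_i^B$, and your Case~2 claim that the downlink ``sees an input stream at most one slot stale'' deserves the same explicit offset bookkeeping you carried out for Case~1; neither affects correctness.
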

\begin{proof}
	Consider the case of $\numOfDLResources < \numOfULResources$ and observe that if $\pi^{\text{UL}}(t)$ satisfies \eqref{eq:aoiMin_ul_constr}, it also satisfies \eqref{eq:aoiMin_dl_constr}. Consider the case that $\pi^{\text{DL}}(t)\neq \pi^{\text{UL}}(t-1)$. By replacing it with $\tilde{\pi}^{\text{DL}}(t):=\pi^{\text{UL}}(t-1)$ we achieve that $\delta_i[k_i(t-1)] = 1$ $\forall i:\pi_i^{\text{UL}}(t-1)=1$. Hence, the information set $\tilde{\mathcal{I}}_i[k_i(t)]\supset\mathcal{I}_i[k_i(t)]$, yielding $\tilde{s}_i[k_i(t)]\geq s_i[k_i(t)]$ and $\tilde{\Delta}_i[k_i(t)]\leq\Delta_i[k_i(t)]$, respectively. 
	
	Now assume that $\numOfDLResources \geq \numOfULResources$ and observe that if $\pi^{\text{DL}}(t)$ satisfies \eqref{eq:aoiMin_dl_constr}, it also satisfies \eqref{eq:aoiMin_ul_constr}. Consider the case that $\pi^{\text{UL}}(t)\neq \pi^{\text{DL}}(t+1)$. By replacing it with $\tilde{\pi}^{\text{UL}}(t):= \pi^{\text{DL}}(t+1)$ we achieve that $\delta_i[k_i(t)] = 1$ $\forall i:\pi_i^{\text{DL}}(t+1)=1$. Hence, again the information set $\tilde{\mathcal{I}}_i[k_i(t+1)]\supset\mathcal{I}_i[k_i(t+1)]$, yielding the same result, respectively.
\end{proof}
The intuitive explanation of the Lemma is the following: Assuming that the uplink resources form a bottleneck, anything that has been transmitted on the uplink can directly be forwarded on the downlink. Choosing not to transmit artificially adds an increased AoI that can be avoided. On the other hand, if the downlink resources form a bottleneck, any transmission on the downlink can be matched by fetching the corresponding sensor value one step before. Not doing so again artificially adds an increased AoI. In both cases, it is sufficient to know the optimal decision for only one of the hops, which has been proven to be the greedy solution in~\cite{Kadota2018}.

\subsection{Value-of-Information Scheduler}
\label{subsec:VoI}
We propose an application-aware scheduling algorithm that is jointly allocating resources on both hops. The scheduler obtains the value of each UL and DL packet as a function of age-of-information at each hop and aims to minimize the overall \textit{quadratic network induced error norm} in expectation:
\begin{equation}
\begin{aligned}
& \min\limits_{\pi^{\text{UL}}(t), \pi^{\text{DL}}(t)}
& & \limsup\limits_{T\to\infty}\frac{1}{T}\sum_{t=0}^{T-1} \sum_{i=1}^{N} \E\left[\left\Vert e_i\left[k_i(t)\right]\right\Vert^2 \right]\\
& \text{subject to}
& & \sum_{i=1}^{N} \pi_i^{\text{UL}}(t) \leq \numOfULResources, \\
& & & \sum_{i=1}^{N} \pi_i^{\text{DL}}(t) \leq \numOfDLResources
\end{aligned}
\label{eq:voiMin}
\end{equation}
The scheduling problem in (\ref{eq:voiMin}) is a combinatorial optimization problem and not solvable in polynomial time. By applying dynamic programming, it can be solved for a given finite horizon. However, finding the global optimal solution of (\ref{eq:voiMin}) is computationally very expensive, and is out of scope for this paper as it is not applicable for dynamic schedulers. Instead, we accommodate greedy solution for both links separately where we maximize the transmitted value-of-information at single slot on the uplink as:
\begin{equation}
\begin{aligned}
& \max\limits_{\pi^{\text{UL}}(t)}
& & \sum_{i=1}^{N} \pi_i^{\text{UL}}(t) \cdot v_i^{\text{UL}}(t)\\
& \text{subject to}
& & \sum_{i=1}^{N}\pi_i^{\text{UL}}(t) \leq \numOfULResources, \\
\end{aligned}
\end{equation}
and on the downlink as:
\begin{equation}
\begin{aligned}
& \max\limits_{\pi^{\text{DL}}(t)}
& & \sum_{i=1}^{N} \pi_i^{\text{DL}}(t) \cdot v_i^{\text{DL}}(t)\\
& \text{subject to}
& & \sum_{i=1}^{N}\pi_i^{\text{DL}}(t) \leq \numOfDLResources.\\
\end{aligned}
\end{equation}
Our solution provides an upper bound for the optimal cost function of the problem~\eqref{eq:voiMin}. In Section \ref{sec:results}, we show that even the upper bound by scheduling based on VoI outperforms the optimal AoI scheduling.
\section{Numerical Evaluation}
\label{sec:results}
In this section, we present a simulative analysis and comparison of the schedulers defined in Section \ref{sec:Scheduling}.  

\subsection{Simulation Setup}
We simulate an exemplary set-up with heterogeneous scalar LTI sub-systems, where $\text{cl}=4$ classes have different state matrices $A_i\in \{0.75,1,1.25,1.5\}$. The number of sub-systems $N^{(j)}$ corresponding to a plant class $j$ is assumed to be equal for each $j$, as we vary the total number of sub-systems $N\triangleq\sum_{j=1}^{\text{cl}}N^{\left(j\right)}$. The state-feedback gain matrix is chosen according to deadbeat control strategy $L_i = A_i$ \cite{mamduhi2014event}. Input matrix is equal among loops $B_i = 1,\, \forall i \in \{1,\,\dots,\,N\}$. System noise is given by $w_i \sim \mathcal{N}(0,\,1)$. For the sake of simplicity, we assumed all transmissions to require single time-frequency resource, i.e., $\ulResourceConsumption = \dlResourceConsumption = 1$. We consider equal sampling period for all control loops, i.e., $\samplingPeriod  = 10,\, \forall i \in \{1,\,\dots,\,N\}$. Number of downlink resources is chosen as $\numOfDLResources = 3$ and number of uplink resources is varied between $\numOfULResources \in \{ 1, \, 2, \, 3, \, 6, \, 9  \}$. Simulation run-time $T_{\text{sim}}$ is 20000 transmission slots.

As the performance indicators, we use the average AoI per control loop, i.e., $\avgAge$, to represent information staleness and the Integrated Absolute Error (IAE) per loop, i.e., $\iae$, to quantify the uncertainty in the controlled process. $\avgAge$ and $\iae$ are defined as follows:
\begin{equation}
\avgAge
= \dfrac{1}{\numLoops} \dfrac{1}{T_{\text{sim}}} \sum_{i = 1}^{N} \sum_{t = 0}^{T_{\text{sim}}-1} \Delta_i \left( t \right)
\end{equation}
\begin{equation}
\iae
= \dfrac{1}{\numLoops} \sum_{i = 1}^{N} \sum_{t = 0}^{T_{\text{sim}}-1} \lVert e_i[k_i(t)] \rVert
\end{equation}

\begin{figure}[t]
	\includegraphics[width=\linewidth]{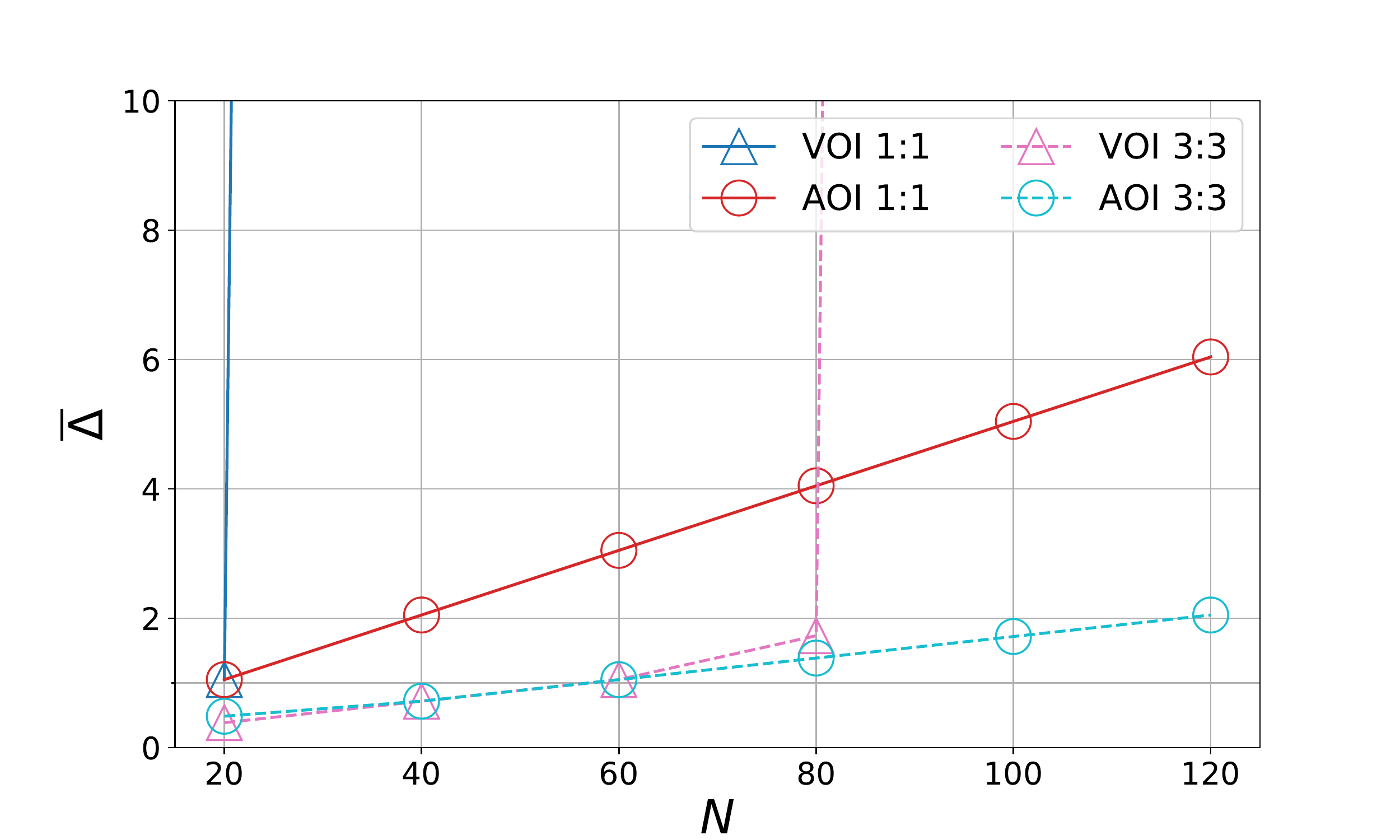}
	\caption{Average AoI per sub-system as a function of the total number of syb-systems $N$, for two configurations of UL/DL resources: $\numOfULResources : \numOfDLResources = \{ 1:1, \, 3:3\}$.}
	\label{fig:AOI}
\end{figure}

\subsection{Results and Evaluation}
First, we investigate the response of the performance metrics to varying number of sub-systems and resources in the network. Figure \ref{fig:AOI} illustrates the average age-of-information per control loop as $\numLoops$ increases for different amounts of available resources $\numOfDLResources=\numOfULResources=3$ and $\numOfDLResources=\numOfULResources=1$. The figure presents results for a neutral DL/UL configuration with equal amounts of resources, where neither hop is a bottleneck. Given $\numLoops = 20$ and $\numOfULResources = \numOfDLResources = 3$, both type of schedulers provide similar performance in terms of $\avgAge$. As $\numLoops$ increases linearly, we observe for the AoI scheduler that the average age per loop is increasing linearly as well. This is expected since AoI scheduler treats all type of plants equally fair and thus information staleness in the network becomes directly proportional to the total amount of resources available in the network. A linear dynamics is also observed for the $\numOfDLResources = \numOfDLResources = 1$ case but with a higher slope, since less network resources are available.

On the other hand, the effect of the unfair treatment of sub-systems by the VoI scheduler becomes evident from the drastic increase of $\avgAge$ after $\numLoops = 20$ and $\numLoops = 80$ for $\numOfULResources = \numOfDLResources = 1$ and $\numOfULResources = \numOfDLResources = 3$ configurations, respectively. This coincides with the average AoI per loop to exceed one, i.e., $\avgAge > 1$ being consistent with Figure \ref{fig:voiNonHD}. From $\numLoops = 40$ on for the $\numOfULResources = \numOfDLResources = 1$ scenario and from $\numLoops = 100$ on for the $\numOfULResources = \numOfDLResources = 3$ scenario, the average AoI $\avgAge$ goes to infinity. This follows from the fact that $\E\left[\left\Vert e_i[\ki] \right\Vert^2 \right]$ converges for plants with $A_i = 0.75$. It can easily be shown by letting $\Delta_i[\ki] \rightarrow \infty$ and applying convergence condition of power series on $g\left( \Delta_i[\ki] \right)$ from Eqn.~(\ref{eq:quadraticerrornormestimation}). As a result of the convergence property, plants $i$ with $A_i=0.75$ never get to transmit as they are dominated by non-converging type of plants with $A_i\geq 1$. 
\begin{figure}[t]
	\includegraphics[width=\linewidth]{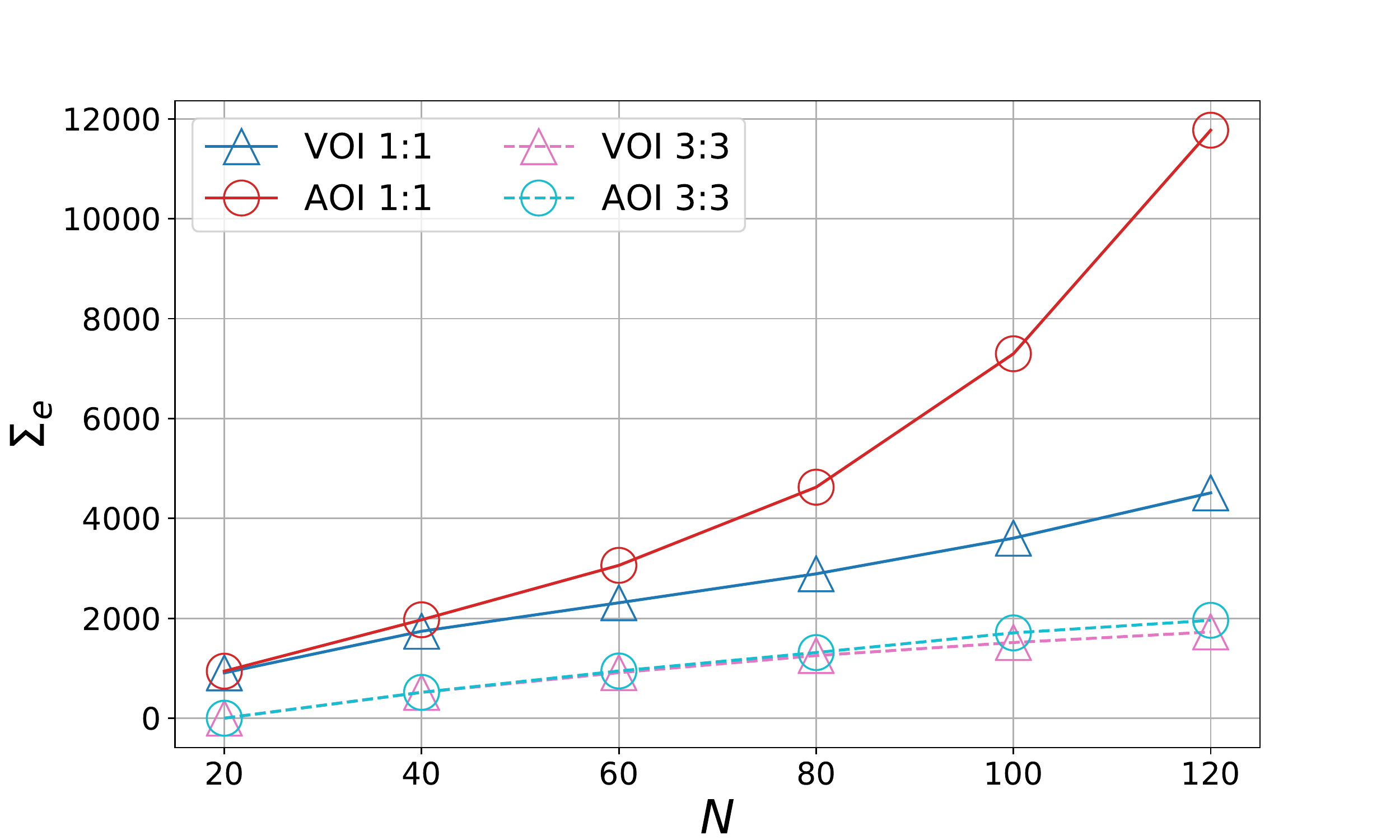}
	\caption{Integrated Absolute Error per sub-system as a function of the total number of sub-systems $N$, for two configurations of UL/DL resources: $\numOfULResources : \numOfDLResources = \{ 1:1, \, 3:3\}$.}
	\label{fig:INIE}
\end{figure}

Now, let us have a look at how $\iae$ is affected by an increase of $\numLoops$. In Figure \ref{fig:INIE} we illustrate how both schedulers perform with respect to reducing the network induced error per loop. From the figure, it is evident that VoI scheduler outperforms the AoI scheduler in $\iae$ metric even though the fairness in age-of-information was not delivered. As we can see, with increasing inadequacy of available resources the gap between the AoI- and VoI scheduler expands faster. This is caused by the non-linear dynamics of network induced error with increasing age-of-information, as visible in Figure \ref{fig:voiNonHD}. Note that, having three uplink and three downlink resources provides triple amount of throughput in average than having one resource in uplink and downlink each. Therefore, in Figure \ref{fig:INIE} the resulting $\avgAge$ and $\iae$ at $N=120$ with $\numOfULResources = \numOfDLResources = 3$ is very close to the $\avgAge$ and $\iae$ values at $N=40$ with $\numOfULResources = \numOfDLResources = 1$. This is also the case for $N=20$ and $N=60$ in Figure \ref{fig:AOI}.
\begin{figure}[t]
	\includegraphics[width=\linewidth]{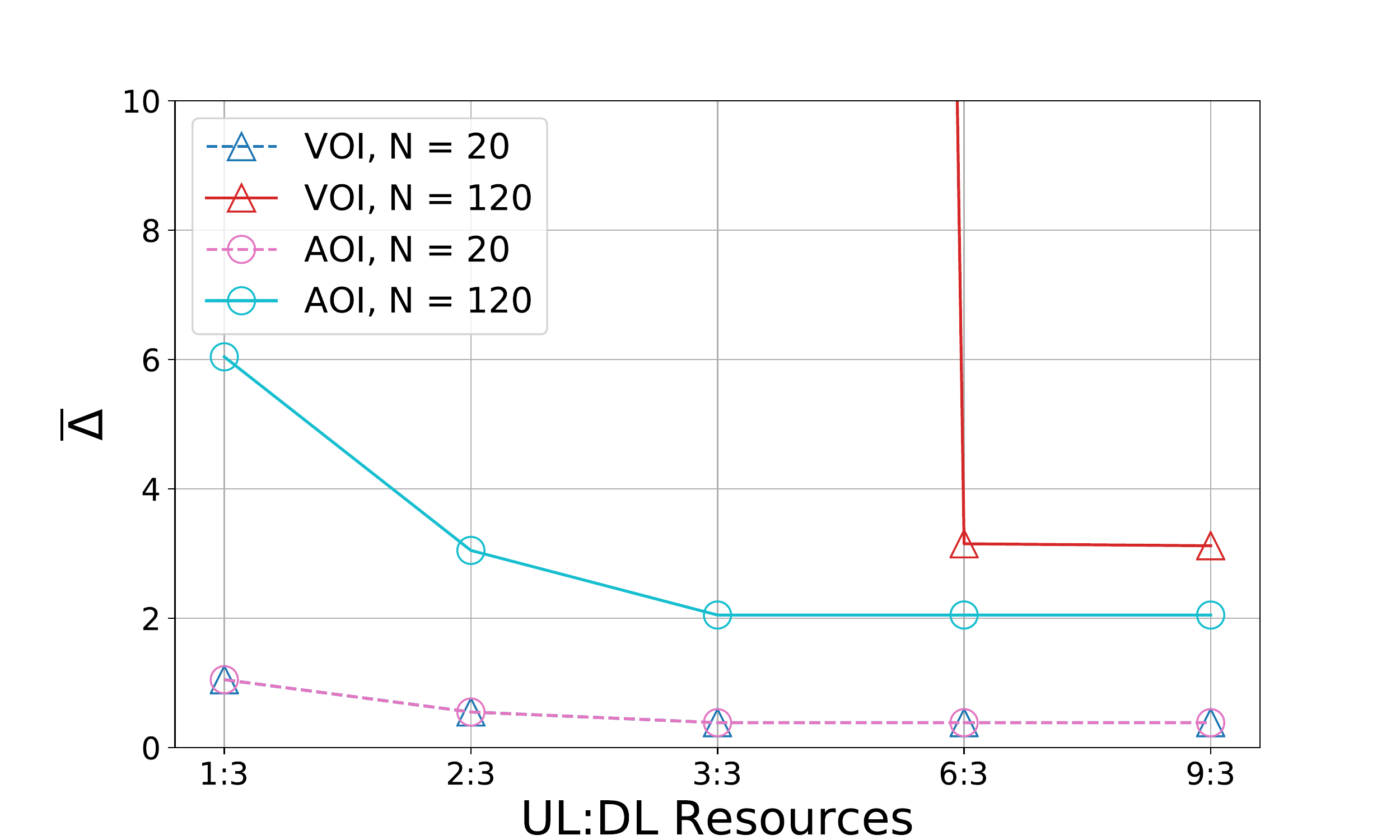}
	\caption{Sensitivity analysis of the average AoI to the UL/DL configuration, with the ratio $\frac{\numOfULResources}{\numOfULResources}$ on the $x$-axis. The number of DL resources is kept fixed $\numOfDLResources=3$, and the number of UL resources is varying $\numOfULResources\in\{1,\dots,9\}$. Left part of the plot ($\numOfULResources<3$) represents an UL bottleneck scenario, while the right part ($\numOfULResources>3$) represents a DL bottleneck scenario.}
	\label{fig:AOI_sens}
\end{figure}

We further investigate the sensitivity of the selected performance indicators to variations in UL/DL resource configuration, by increasing the number of uplink resources $\numOfULResources$ for fixed $\numOfDLResources$. This illustrates a shift of the resource bottleneck from UL to DL. Figures \ref{fig:AOI_sens} and \ref{fig:INIE_sens} show $\avgAge$ and $\iae$ for $\numOfDLResources = 3$ and $\numOfULResources \in \{ 1, \, 2, \, 3, \, 6, \, 9  \}$. We select $N=20$ and $N=120$ as representation of low and high resource demand scenarios, respectively. 

For the low demand case with $\numLoops = 20$, both schedulers produce similar results due to resource abundance in the network. As we cut UL resources down, $\numOfULResources \in \{2, 1\}$, the resulting performance in terms of both indicators decreases due to lower throughput provided. Adding more resources on the uplink, i.e., $\numOfULResources \in \{ 3, 6, 9 \}$ does not have any effect since all sub-systems are provided sufficient transmission opportunities. 

For the high demand scenario with $\numLoops = 120$, we observe that VoI scheduler succeeds at reducing average error per loop and fails at ensuring information freshness at the controller. As long as downlink is the bottleneck, i.e., $\numOfULResources \geq \numOfDLResources$, AoI scheduler does not perceive any performance gain by an increase of $\numOfULResources$. That follows from the definition of AoI scheduler in Section \ref{subsec:AoI}. Since age shows a deterministic behavior, no additional resources are used on the UL unless the packets are going to be forwarded in the next transmission opportunity. However, VoI benefits from every additional UL resource since $\bs$ is able to reduce the uncertainty of a sub-system at the $\bs$. Thus, it gets the chance to prefer some \textit{more valuable} packets over the ones carrying lower valued information by examining the packet content. As a result, we observe an ongoing but converging decrease in $\iae$ as we move from $\numOfULResources=1$ towards $\numOfULResources = 9$. Similarly, by virtue of additional UL resources, the loops which never get the opportunity before, find the chance to transmit. Hence, the average age $\avgAge$ gets a finite value again for $N=120$ and $\numOfULResources \in \{6, \, 9\}$. 

\begin{figure}[t]
	\includegraphics[width=\linewidth]{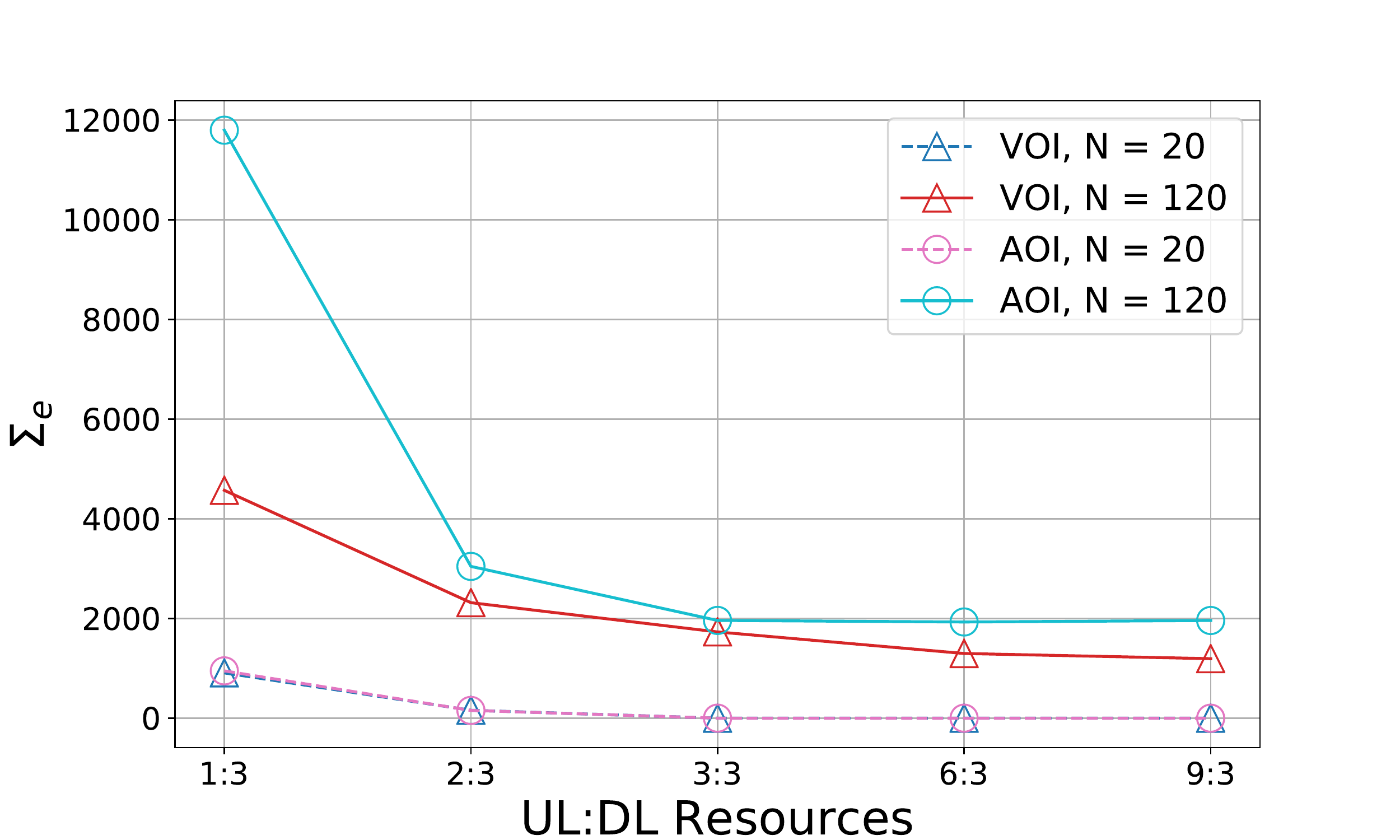}
	\caption{Sensitivity analysis of the average quadratic error norm to the UL/DL configuration, with the ratio $\frac{\numOfULResources}{\numOfULResources}$ on the $x$-axis. The number of DL resources is kept fixed $\numOfDLResources=3$, and the number of UL resources is varying $\numOfULResources\in\{1,\dots,9\}$. Left part of the plot ($\numOfULResources<3$) represents an UL bottleneck scenario, while the right part ($\numOfULResources>3$) represents a DL bottleneck scenario.}
	\label{fig:INIE_sens}
\end{figure}
\section{Related Work}
\label{sec:relatedwork}
Cross-layer network design~\cite{cervin2008scheduling, liu2004wireless, park2018wireless, li2016wireless} has attracted researchers by virtue of providing higher quality-of-control to networked control applications. Control-aware MAC strategies have been proposed for contention-based access in~\cite{vilgelm2016adaptive, gatsis2016control}, and for contention free-access in~\cite{walsh2001scheduling, vilgelm2017control, vasconcelos2017optimal, mamduhi2015robust}. In \cite{walsh2001scheduling}, authors study the centralized scheduling problem with multiple control loops closed over a shared communication channel. They assume the injection of error reports into control traffic in wired industrial networks. Each sensor reports the estimation error to the scheduler. Free network resources are distributed among sub-systems starting from the ones with the maximum error. 
In \cite{vilgelm2017control} authors compare control-aware scheduling to control-unaware schedulers in a single-hop cellular networked control systems with varying channel qualities among loops. They show that, the proposed control-aware scheduler outperforms the control-unaware schedulers such as proportional fair and maximum-throughput with respect to quality-of-control. 
\cite{vasconcelos2017optimal} studies \textit{one-shot} joint scheduling and estimation problem under resource constraints. In their work, they consider a network shared by multiple sensor and estimator pairs. Given the probabilistic distributions of individual states, centralized scheduler chooses a single sensor-estimator pair to transmit. They show that it is globally optimal to choose the maximum quadratic norm as scheduling and mean-value estimation as the estimation strategy. As the name one-shot suggests, the work focuses only on a single transmission decision and does not consider application-dependent propagation of estimation error over multiple time-steps. \cite{mamduhi2015robust} considers a two-level scheduling problem, i.e., sensors drop their packet locally based on a predefined error threshold value and a centralized scheduler allocates resources probabilistically among the contenting control loops. The scheduler collects local error information from each control loop as in \cite{walsh2001scheduling} and calculates channel access grant probabilities based on the reported value.

The cross-layer design problem has been generalized by the introduction of the concept of the AoI~\cite{kaul2012real}. AoI has defined the notion of information freshness, uniform for all applications. Many recent works have taken on the problem of scheduling with AoI-based utility~\cite{kaul2012status, costa2016on, hsu2017age, sun2017update}. Most relevant for cross-layer design, Kosta~\textit{et al.}~\cite{kosta2017age} introduce the term value-of-information (VoI), and study the case with its non-linear behavior. In this work, we go one step further and define the VoI as a functional of age and system dynamics of individual control applications. For the joint DL/UL scheduling, typical for cellular network scenarios, we compare VoI and AoI scheduling approaches with respect to the resulting NCS performance.
\section{Conclusions}
\label{sec:conclusions}
Age-of-Information is a newly introduced measure to capture information freshness from the application layer perspective. It has been used for data scheduling in multi-user scenarios. In the context of two-hop networked control systems, we were able to show that AoI alone does not capture the requirements of networked control loops. In addition to age, the evolution of uncertainty in the system over time is highly dependent of the application. We were able to formulate the estimation error as a function of the AoI and application specific system parameters. We have shown that using the VoI as scheduling metric leads to reduced estimation error in the stochastic process than providing regular updates to each sub-system. 

\begin{acks}
	This work has been carried out with the support of the German Research
	Foundation (DFG) grant KE1863/5-1 within the Priority Program SPP 1914 ``Cyber-Physical Networking''.
\end{acks}

\bibliographystyle{ACM-Reference-Format}

\begin{thebibliography}{27}
	
	
	\ifx \showCODEN    \undefined \def \showCODEN     #1{\unskip}     \fi
	\ifx \showDOI      \undefined \def \showDOI       #1{#1}\fi
	\ifx \showISBNx    \undefined \def \showISBNx     #1{\unskip}     \fi
	\ifx \showISBNxiii \undefined \def \showISBNxiii  #1{\unskip}     \fi
	\ifx \showISSN     \undefined \def \showISSN      #1{\unskip}     \fi
	\ifx \showLCCN     \undefined \def \showLCCN      #1{\unskip}     \fi
	\ifx \shownote     \undefined \def \shownote      #1{#1}          \fi
	\ifx \showarticletitle \undefined \def \showarticletitle #1{#1}   \fi
	\ifx \showURL      \undefined \def \showURL       {\relax}        \fi
	\providecommand\bibfield[2]{#2}
	\providecommand\bibinfo[2]{#2}
	\providecommand\natexlab[1]{#1}
	\providecommand\showeprint[2][]{arXiv:#2}
	
	\bibitem[\protect\citeauthoryear{Cervin and Henningsson}{Cervin and
		Henningsson}{2008}]%
	{cervin2008scheduling}
	\bibfield{author}{\bibinfo{person}{A. Cervin} {and} \bibinfo{person}{T.
			Henningsson}.} \bibinfo{year}{2008}\natexlab{}.
	\newblock \showarticletitle{Scheduling of event-triggered controllers on a
		shared network}. In \bibinfo{booktitle}{\emph{Proc. IEEE Conference on
			Decision and Control}}. \bibinfo{pages}{3601--3606}.
	\newblock
	\showISSN{0191-2216}
	\urldef\tempurl%
	\url{https://doi.org/10.1109/CDC.2008.4738939}
	\showDOI{\tempurl}
	
	
	\bibitem[\protect\citeauthoryear{Costa, Codreanu, and Ephremides}{Costa
		et~al\mbox{.}}{2014}]%
	{costa2014age}
	\bibfield{author}{\bibinfo{person}{M. Costa}, \bibinfo{person}{M. Codreanu},
		{and} \bibinfo{person}{A. Ephremides}.} \bibinfo{year}{2014}\natexlab{}.
	\newblock \showarticletitle{Age of information with packet management}. In
	\bibinfo{booktitle}{\emph{Proc. IEEE International Symposium on Information
			Theory}}. \bibinfo{pages}{1583--1587}.
	\newblock
	\showISSN{2157-8117}
	\urldef\tempurl%
	\url{https://doi.org/10.1109/ISIT.2014.6875100}
	\showDOI{\tempurl}
	
	
	\bibitem[\protect\citeauthoryear{Costa, Codreanu, and Ephremides}{Costa
		et~al\mbox{.}}{2016}]%
	{costa2016on}
	\bibfield{author}{\bibinfo{person}{M. Costa}, \bibinfo{person}{M. Codreanu},
		{and} \bibinfo{person}{A. Ephremides}.} \bibinfo{year}{2016}\natexlab{}.
	\newblock \showarticletitle{{On the Age of Information in Status Update Systems
			With Packet Management}}.
	\newblock \bibinfo{journal}{\emph{IEEE Transactions on Information Theory}}
	\bibinfo{volume}{62}, \bibinfo{number}{4} (\bibinfo{date}{April}
	\bibinfo{year}{2016}), \bibinfo{pages}{1897--1910}.
	\newblock
	\showISSN{0018-9448}
	\urldef\tempurl%
	\url{https://doi.org/10.1109/TIT.2016.2533395}
	\showDOI{\tempurl}
	
	
	\bibitem[\protect\citeauthoryear{Gatsis, Ribeiro, and Pappas}{Gatsis
		et~al\mbox{.}}{2016}]%
	{gatsis2016control}
	\bibfield{author}{\bibinfo{person}{K. Gatsis}, \bibinfo{person}{A. Ribeiro},
		{and} \bibinfo{person}{G.~J. Pappas}.} \bibinfo{year}{2016}\natexlab{}.
	\newblock \showarticletitle{{Control-Aware Random Access Communication}}. In
	\bibinfo{booktitle}{\emph{Proc. ACM/IEEE 7th International Conference on
			Cyber-Physical Systems (ICCPS)}}. \bibinfo{pages}{1--9}.
	\newblock
	\urldef\tempurl%
	\url{https://doi.org/10.1109/ICCPS.2016.7479071}
	\showDOI{\tempurl}
	
	
	\bibitem[\protect\citeauthoryear{Hsu, Modiano, and Duan}{Hsu
		et~al\mbox{.}}{2017}]%
	{hsu2017age}
	\bibfield{author}{\bibinfo{person}{Y. Hsu}, \bibinfo{person}{E. Modiano}, {and}
		\bibinfo{person}{L. Duan}.} \bibinfo{year}{2017}\natexlab{}.
	\newblock \showarticletitle{{Age of information: Design and analysis of optimal
			scheduling algorithms}}. In \bibinfo{booktitle}{\emph{Proc. IEEE
			International Symposium on Information Theory (ISIT)}}.
	\bibinfo{pages}{561--565}.
	\newblock
	\showISSN{2157-8117}
	\urldef\tempurl%
	\url{https://doi.org/10.1109/ISIT.2017.8006590}
	\showDOI{\tempurl}
	
	
	\bibitem[\protect\citeauthoryear{Kadota, Sinha, Uysal-Biyikoglu, Singh, and
		Modiano}{Kadota et~al\mbox{.}}{2018}]%
	{Kadota2018}
	\bibfield{author}{\bibinfo{person}{Igor Kadota}, \bibinfo{person}{Abhishek
			Sinha}, \bibinfo{person}{Elif Uysal-Biyikoglu}, \bibinfo{person}{Rahul
			Singh}, {and} \bibinfo{person}{Eytan Modiano}.}
	\bibinfo{year}{2018}\natexlab{}.
	\newblock \showarticletitle{{Scheduling Policies for Minimizing Age of
			Information in Broadcast Wireless Networks}}.
	\newblock  (\bibinfo{date}{jan} \bibinfo{year}{2018}).
	\newblock
	\showeprint[arxiv]{1801.01803}
	\urldef\tempurl%
	\url{http://arxiv.org/abs/1801.01803}
	\showURL{%
		\tempurl}
	
	
	\bibitem[\protect\citeauthoryear{Kaul, Yates, and Gruteser}{Kaul
		et~al\mbox{.}}{2012a}]%
	{kaul2012real}
	\bibfield{author}{\bibinfo{person}{S. Kaul}, \bibinfo{person}{R. Yates}, {and}
		\bibinfo{person}{M. Gruteser}.} \bibinfo{year}{2012}\natexlab{a}.
	\newblock \showarticletitle{Real-time status: How often should one update?}. In
	\bibinfo{booktitle}{\emph{Proc. IEEE INFOCOM}}. \bibinfo{pages}{2731--2735}.
	\newblock
	\showISSN{0743-166X}
	\urldef\tempurl%
	\url{https://doi.org/10.1109/INFCOM.2012.6195689}
	\showDOI{\tempurl}
	
	
	\bibitem[\protect\citeauthoryear{Kaul, Yates, and Gruteser}{Kaul
		et~al\mbox{.}}{2012b}]%
	{kaul2012status}
	\bibfield{author}{\bibinfo{person}{S.~K. Kaul}, \bibinfo{person}{R.~D. Yates},
		{and} \bibinfo{person}{M. Gruteser}.} \bibinfo{year}{2012}\natexlab{b}.
	\newblock \showarticletitle{Status updates through queues}. In
	\bibinfo{booktitle}{\emph{Proc. Conference on Information Sciences and
			Systems (CISS)}}. \bibinfo{pages}{1--6}.
	\newblock
	\urldef\tempurl%
	\url{https://doi.org/10.1109/CISS.2012.6310931}
	\showDOI{\tempurl}
	
	
	\bibitem[\protect\citeauthoryear{Kheirkhah, Aschenbrenner, Fritscher, Sittner,
		and Schilling}{Kheirkhah et~al\mbox{.}}{2015}]%
	{kheirkhah2015networked}
	\bibfield{author}{\bibinfo{person}{A. Kheirkhah}, \bibinfo{person}{D.
			Aschenbrenner}, \bibinfo{person}{M. Fritscher}, \bibinfo{person}{F. Sittner},
		{and} \bibinfo{person}{K. Schilling}.} \bibinfo{year}{2015}\natexlab{}.
	\newblock \showarticletitle{Networked Control Systems with Application in the
		Industrial Tele-Robotics}.
	\newblock \bibinfo{journal}{\emph{IFAC-PapersOnLine}} \bibinfo{volume}{48},
	\bibinfo{number}{10} (\bibinfo{year}{2015}), \bibinfo{pages}{147 -- 152}.
	\newblock
	\showISSN{2405-8963}
	\urldef\tempurl%
	\url{https://doi.org/10.1016/j.ifacol.2015.08.123}
	\showDOI{\tempurl}
	\newblock
	\shownote{IFAC Conference on Embedded Systems, Computer Intelligence and
		Telematics CESCIT.}
	
	
	\bibitem[\protect\citeauthoryear{Kosta, Pappas, Ephremides, and
		Angelakis}{Kosta et~al\mbox{.}}{2017}]%
	{kosta2017age}
	\bibfield{author}{\bibinfo{person}{A. Kosta}, \bibinfo{person}{N. Pappas},
		\bibinfo{person}{A. Ephremides}, {and} \bibinfo{person}{V. Angelakis}.}
	\bibinfo{year}{2017}\natexlab{}.
	\newblock \showarticletitle{{Age and value of information: Non-linear age
			case}}. In \bibinfo{booktitle}{\emph{Proc. IEEE International Symposium on
			Information Theory (ISIT)}}. \bibinfo{pages}{326--330}.
	\newblock
	\showISSN{2157-8117}
	\urldef\tempurl%
	\url{https://doi.org/10.1109/ISIT.2017.8006543}
	\showDOI{\tempurl}
	
	
	\bibitem[\protect\citeauthoryear{Li, Ma, Westenbroek, Wu, Gonzalez, and Lu}{Li
		et~al\mbox{.}}{2016}]%
	{li2016wireless}
	\bibfield{author}{\bibinfo{person}{B. Li}, \bibinfo{person}{Y. Ma},
		\bibinfo{person}{T. Westenbroek}, \bibinfo{person}{C. Wu},
		\bibinfo{person}{H. Gonzalez}, {and} \bibinfo{person}{C. Lu}.}
	\bibinfo{year}{2016}\natexlab{}.
	\newblock \showarticletitle{{Wireless Routing and Control: A Cyber-Physical
			Case Study}}. In \bibinfo{booktitle}{\emph{Proc. ACM/IEEE International
			Conference on Cyber-Physical Systems (ICCPS)}}. \bibinfo{pages}{1--10}.
	\newblock
	\urldef\tempurl%
	\url{https://doi.org/10.1109/ICCPS.2016.7479131}
	\showDOI{\tempurl}
	
	
	\bibitem[\protect\citeauthoryear{Liu and Goldsmith}{Liu and Goldsmith}{2004}]%
	{liu2004wireless}
	\bibfield{author}{\bibinfo{person}{Xiangheng Liu} {and} \bibinfo{person}{A.
			Goldsmith}.} \bibinfo{year}{2004}\natexlab{}.
	\newblock \showarticletitle{Wireless network design for distributed control}.
	In \bibinfo{booktitle}{\emph{Proc. IEEE Conference on Decision and Control
			(CDC)}}, Vol.~\bibinfo{volume}{3}. \bibinfo{pages}{2823--2829 Vol.3}.
	\newblock
	\showISSN{0191-2216}
	\urldef\tempurl%
	\url{https://doi.org/10.1109/CDC.2004.1428892}
	\showDOI{\tempurl}
	
	
	\bibitem[\protect\citeauthoryear{Mamduhi, Molin, Toli\'c, and Hirche}{Mamduhi
		et~al\mbox{.}}{2017}]%
	{mamduhi2017error}
	\bibfield{author}{\bibinfo{person}{M.~H. Mamduhi}, \bibinfo{person}{A. Molin},
		\bibinfo{person}{D. Toli\'c}, {and} \bibinfo{person}{S. Hirche}.}
	\bibinfo{year}{2017}\natexlab{}.
	\newblock \showarticletitle{Error-dependent data scheduling in resource-aware
		multi-loop networked control systems}.
	\newblock \bibinfo{journal}{\emph{Automatica}}  \bibinfo{volume}{81}
	(\bibinfo{year}{2017}), \bibinfo{pages}{209 -- 216}.
	\newblock
	\showISSN{0005-1098}
	\urldef\tempurl%
	\url{https://doi.org/10.1016/j.automatica.2017.03.005}
	\showDOI{\tempurl}
	
	
	\bibitem[\protect\citeauthoryear{Mamduhi, Toli\'c, and Hirche}{Mamduhi
		et~al\mbox{.}}{2015}]%
	{mamduhi2015robust}
	\bibfield{author}{\bibinfo{person}{M.~H. Mamduhi}, \bibinfo{person}{D.
			Toli\'c}, {and} \bibinfo{person}{S. Hirche}.}
	\bibinfo{year}{2015}\natexlab{}.
	\newblock \showarticletitle{Robust event-based data scheduling for resource
		constrained Networked Control Systems}. In \bibinfo{booktitle}{\emph{Proc.
			American Control Conference (ACC)}}. \bibinfo{pages}{4695--4701}.
	\newblock
	\showISSN{0743-1619}
	\urldef\tempurl%
	\url{https://doi.org/10.1109/ACC.2015.7172069}
	\showDOI{\tempurl}
	
	
	\bibitem[\protect\citeauthoryear{Mamduhi, Toli\'c, Molin, and Hirche}{Mamduhi
		et~al\mbox{.}}{2014}]%
	{mamduhi2014event}
	\bibfield{author}{\bibinfo{person}{M.~H. Mamduhi}, \bibinfo{person}{D.
			Toli\'c}, \bibinfo{person}{A. Molin}, {and} \bibinfo{person}{S. Hirche}.}
	\bibinfo{year}{2014}\natexlab{}.
	\newblock \showarticletitle{Event-triggered scheduling for stochastic
		multi-loop networked control systems with packet dropouts}. In
	\bibinfo{booktitle}{\emph{Proc. IEEE Conference on Decision and Control}}.
	\bibinfo{pages}{2776--2782}.
	\newblock
	\showISSN{0191-2216}
	\urldef\tempurl%
	\url{https://doi.org/10.1109/CDC.2014.7039815}
	\showDOI{\tempurl}
	
	
	\bibitem[\protect\citeauthoryear{Molin and Hirche}{Molin and Hirche}{2009}]%
	{5400528}
	\bibfield{author}{\bibinfo{person}{A. Molin} {and} \bibinfo{person}{S.
			Hirche}.} \bibinfo{year}{2009}\natexlab{}.
	\newblock \showarticletitle{{On LQG joint optimal scheduling and control under
			communication constraints}}. In \bibinfo{booktitle}{\emph{Proc. IEEE
			Conference on Decision and Control held jointly with Chinese Control
			Conference}}. \bibinfo{pages}{5832--5838}.
	\newblock
	\showISSN{0191-2216}
	\urldef\tempurl%
	\url{https://doi.org/10.1109/CDC.2009.5400528}
	\showDOI{\tempurl}
	
	
	\bibitem[\protect\citeauthoryear{Molin and Hirche}{Molin and Hirche}{2014}]%
	{6882817}
	\bibfield{author}{\bibinfo{person}{A. Molin} {and} \bibinfo{person}{S.
			Hirche}.} \bibinfo{year}{2014}\natexlab{}.
	\newblock \showarticletitle{{Price-Based Adaptive Scheduling in Multi-Loop
			Control Systems With Resource Constraints}}.
	\newblock \bibinfo{journal}{\emph{IEEE Trans. Automat. Control}}
	\bibinfo{volume}{59}, \bibinfo{number}{12} (\bibinfo{date}{Dec}
	\bibinfo{year}{2014}), \bibinfo{pages}{3282--3295}.
	\newblock
	\showISSN{0018-9286}
	\urldef\tempurl%
	\url{https://doi.org/10.1109/TAC.2014.2351892}
	\showDOI{\tempurl}
	
	
	\bibitem[\protect\citeauthoryear{Park, Ergen, Fischione, Lu, and
		Johansson}{Park et~al\mbox{.}}{2018}]%
	{park2018wireless}
	\bibfield{author}{\bibinfo{person}{P. Park}, \bibinfo{person}{S.~Coleri Ergen},
		\bibinfo{person}{C. Fischione}, \bibinfo{person}{C. Lu}, {and}
		\bibinfo{person}{K.~H. Johansson}.} \bibinfo{year}{2018}\natexlab{}.
	\newblock \showarticletitle{Wireless Network Design for Control Systems: A
		Survey}.
	\newblock \bibinfo{journal}{\emph{IEEE Communications Surveys Tutorials}}
	\bibinfo{volume}{20}, \bibinfo{number}{2} (\bibinfo{date}{Secondquarter}
	\bibinfo{year}{2018}), \bibinfo{pages}{978--1013}.
	\newblock
	\showISSN{1553-877X}
	\urldef\tempurl%
	\url{https://doi.org/10.1109/COMST.2017.2780114}
	\showDOI{\tempurl}
	
	
	\bibitem[\protect\citeauthoryear{Seiler and Sengupta}{Seiler and
		Sengupta}{2001}]%
	{seiler2001analysis}
	\bibfield{author}{\bibinfo{person}{P. Seiler} {and} \bibinfo{person}{R.
			Sengupta}.} \bibinfo{year}{2001}\natexlab{}.
	\newblock \showarticletitle{Analysis of communication losses in vehicle control
		problems}. In \bibinfo{booktitle}{\emph{Proc. American Control Conference
			(ACC)}}, Vol.~\bibinfo{volume}{2}. \bibinfo{pages}{1491--1496 vol.2}.
	\newblock
	\showISSN{0743-1619}
	\urldef\tempurl%
	\url{https://doi.org/10.1109/ACC.2001.945935}
	\showDOI{\tempurl}
	
	
	\bibitem[\protect\citeauthoryear{Shannon}{Shannon}{1948}]%
	{shannon1948}
	\bibfield{author}{\bibinfo{person}{C.~E. Shannon}.}
	\bibinfo{year}{1948}\natexlab{}.
	\newblock \showarticletitle{A mathematical theory of communication}.
	\newblock \bibinfo{journal}{\emph{The Bell System Technical Journal}}
	\bibinfo{volume}{27}, \bibinfo{number}{3} (\bibinfo{date}{July}
	\bibinfo{year}{1948}), \bibinfo{pages}{379--423}.
	\newblock
	\showISSN{0005-8580}
	\urldef\tempurl%
	\url{https://doi.org/10.1002/j.1538-7305.1948.tb01338.x}
	\showDOI{\tempurl}
	
	
	\bibitem[\protect\citeauthoryear{Shariatmadari, Ratasuk, Iraji, Laya, Taleb,
		Jäntti, and Ghosh}{Shariatmadari et~al\mbox{.}}{2015}]%
	{Shariatmadari2015}
	\bibfield{author}{\bibinfo{person}{H. Shariatmadari}, \bibinfo{person}{R.
			Ratasuk}, \bibinfo{person}{S. Iraji}, \bibinfo{person}{A. Laya},
		\bibinfo{person}{T. Taleb}, \bibinfo{person}{R. Jäntti}, {and}
		\bibinfo{person}{A. Ghosh}.} \bibinfo{year}{2015}\natexlab{}.
	\newblock \showarticletitle{Machine-type communications: current status and
		future perspectives toward 5G systems}.
	\newblock \bibinfo{journal}{\emph{IEEE Communications Magazine}}
	\bibinfo{volume}{53}, \bibinfo{number}{9} (\bibinfo{date}{Sep.}
	\bibinfo{year}{2015}), \bibinfo{pages}{10--17}.
	\newblock
	\showISSN{0163-6804}
	\urldef\tempurl%
	\url{https://doi.org/10.1109/MCOM.2015.7263367}
	\showDOI{\tempurl}
	
	
	\bibitem[\protect\citeauthoryear{Sinopoli, Schenato, Franceschetti, Poolla, and
		Sastry}{Sinopoli et~al\mbox{.}}{2004}]%
	{sinopoli2004time}
	\bibfield{author}{\bibinfo{person}{B. Sinopoli}, \bibinfo{person}{L. Schenato},
		\bibinfo{person}{M. Franceschetti}, \bibinfo{person}{K. Poolla}, {and}
		\bibinfo{person}{S.~S. Sastry}.} \bibinfo{year}{2004}\natexlab{}.
	\newblock \showarticletitle{Time varying optimal control with packet losses}.
	In \bibinfo{booktitle}{\emph{2004 43rd IEEE Conference on Decision and
			Control (CDC) (IEEE Cat. No.04CH37601)}}, Vol.~\bibinfo{volume}{2}.
	\bibinfo{pages}{1938--1943 Vol.2}.
	\newblock
	\showISSN{0191-2216}
	\urldef\tempurl%
	\url{https://doi.org/10.1109/CDC.2004.1430331}
	\showDOI{\tempurl}
	
	
	\bibitem[\protect\citeauthoryear{Sun, Uysal-Biyikoglu, Yates, Koksal, and
		Shroff}{Sun et~al\mbox{.}}{2017}]%
	{sun2017update}
	\bibfield{author}{\bibinfo{person}{Y. Sun}, \bibinfo{person}{E.
			Uysal-Biyikoglu}, \bibinfo{person}{R.~D. Yates}, \bibinfo{person}{C.~E.
			Koksal}, {and} \bibinfo{person}{N.~B. Shroff}.}
	\bibinfo{year}{2017}\natexlab{}.
	\newblock \showarticletitle{Update or Wait: How to Keep Your Data Fresh}.
	\newblock \bibinfo{journal}{\emph{IEEE Transactions on Information Theory}}
	\bibinfo{volume}{63}, \bibinfo{number}{11} (\bibinfo{date}{Nov}
	\bibinfo{year}{2017}), \bibinfo{pages}{7492--7508}.
	\newblock
	\showISSN{0018-9448}
	\urldef\tempurl%
	\url{https://doi.org/10.1109/TIT.2017.2735804}
	\showDOI{\tempurl}
	
	
	\bibitem[\protect\citeauthoryear{Vasconcelos, Nayyar, and Mitra}{Vasconcelos
		et~al\mbox{.}}{2017}]%
	{vasconcelos2017optimal}
	\bibfield{author}{\bibinfo{person}{M.~M. Vasconcelos}, \bibinfo{person}{A.
			Nayyar}, {and} \bibinfo{person}{U. Mitra}.} \bibinfo{year}{2017}\natexlab{}.
	\newblock \showarticletitle{Optimal sensor scheduling strategies in networked
		estimation}. In \bibinfo{booktitle}{\emph{Proc. IEEE Annual Conference on
			Decision and Control (CDC)}}. \bibinfo{pages}{5378--5384}.
	\newblock
	\urldef\tempurl%
	\url{https://doi.org/10.1109/CDC.2017.8264456}
	\showDOI{\tempurl}
	
	
	\bibitem[\protect\citeauthoryear{Vilgelm, Ayan, Zoppi, and Kellerer}{Vilgelm
		et~al\mbox{.}}{2017}]%
	{vilgelm2017control}
	\bibfield{author}{\bibinfo{person}{M. Vilgelm}, \bibinfo{person}{O. Ayan},
		\bibinfo{person}{S. Zoppi}, {and} \bibinfo{person}{W. Kellerer}.}
	\bibinfo{year}{2017}\natexlab{}.
	\newblock \showarticletitle{{Control-aware Uplink Resource Allocation for
			Cyber-Physical Systems in Wireless Networks}}. In
	\bibinfo{booktitle}{\emph{Proc. European Wireless Conference}}.
	\bibinfo{pages}{1--7}.
	\newblock
	
	
	\bibitem[\protect\citeauthoryear{Vilgelm, Mamduhi, Kellerer, and
		Hirche}{Vilgelm et~al\mbox{.}}{2016}]%
	{vilgelm2016adaptive}
	\bibfield{author}{\bibinfo{person}{M. Vilgelm}, \bibinfo{person}{M.~H.
			Mamduhi}, \bibinfo{person}{W. Kellerer}, {and} \bibinfo{person}{S. Hirche}.}
	\bibinfo{year}{2016}\natexlab{}.
	\newblock \showarticletitle{{Adaptive Decentralized MAC for Event-Triggered
			Networked Control Systems}}. In \bibinfo{booktitle}{\emph{Proc. ACM
			International Conference on Hybrid Systems: Computation and Control}}
	\emph{(\bibinfo{series}{HSCC '16})}. \bibinfo{publisher}{ACM},
	\bibinfo{pages}{165--174}.
	\newblock
	\showISBNx{978-1-4503-3955-1}
	\urldef\tempurl%
	\url{https://doi.org/10.1145/2883817.2883829}
	\showDOI{\tempurl}
	
	
	\bibitem[\protect\citeauthoryear{Walsh and Ye}{Walsh and Ye}{2001}]%
	{walsh2001scheduling}
	\bibfield{author}{\bibinfo{person}{G.~C. Walsh} {and} \bibinfo{person}{H. Ye}.}
	\bibinfo{year}{2001}\natexlab{}.
	\newblock \showarticletitle{Scheduling of networked control systems}.
	\newblock \bibinfo{journal}{\emph{IEEE Control Systems Magazine}}
	\bibinfo{volume}{21}, \bibinfo{number}{1} (\bibinfo{date}{Feb}
	\bibinfo{year}{2001}), \bibinfo{pages}{57--65}.
	\newblock
	\showISSN{1066-033X}
	\urldef\tempurl%
	\url{https://doi.org/10.1109/37.898792}
	\showDOI{\tempurl}
	
	
\end{thebibliography}

\appendix
\section{Proof of Lemma \ref{lem:Lemma1}}
\label{app:lemma1}
\begin{proof}[Proof of Lemma \ref{lem:Lemma1}]\label{app:proof_lemma_1}
	Given $\Delta_i[\ki] > 0$ as in (\ref{eq:Cestimation}), it holds that:
	\begin{align*}
	\hat{x}_i[\ki] & = \E\left[x_i[k]~|~\mathcal{I}_i[\ki]\right] \nonumber\\
	& = \E\left[A_i x_i[\ki-1] + B_i u_i[\ki-1] + w_i[\ki-1]~|~\mathcal{I}_i[\ki]\right] \nonumber \\
	& = \E\big[A_i (A_i x_i[\ki-2] + B_i u_i[\ki-2] + w_i[\ki-2])\nonumber\\
	&\quad  + B_i u_i[\ki-1] + w_i[\ki-1]~|~\mathcal{I}_i[\ki]\big]\nonumber\\
	& = \E \bigg[A_i^{\Delta_i[\ki]} z_i[s_i] + \sum_{q=1}^{\Delta_i[\ki]} A_i^{q-1} w_i[\ki-q] \nonumber\\
	& \quad + \sum_{q=1}^{\Delta_i[\ki]}A_i^{q-1} B_i u_i[\ki-q] ~\big|~\mathcal{I}_i[\ki]\bigg]\nonumber\\
		& = A_i^{\Delta_i[\ki]} z_i[s_i] + \sum_{q=1}^{\Delta_i[\ki]} A_i^{q-1} B_i u_i[k-q]\nonumber  	\\
	\end{align*} 
\end{proof}
\section{Proof of Lemma \ref{lem:Lemma2}}
\label{app:lemma2}
\begin{proof} [Proof of Lemma (\ref{lem:Lemma2})]
	\label{app:proof_lemma_2}
	Given $\Delta_i[\ki] > 0$:
	\begin{align*}
		&\errornorm =  \quad \E\left[\left(e_i[\ki]\right)^T e_i[\ki]\right] \\
		&= \quad \E\left[\left(\sum_{r=1}^{\Delta_i[\ki]}A_i^{r-1} w_i[\ki-r]\right)^T \sum_{r=1}^{\Delta_i[\ki]}A_i^{r-1} w_i[\ki-r]\right]\\
		&= \quad \E\left[\sum_{r=1}^{\Delta_i[\ki]} \left(w_i[\ki-r]\right)^T \left(A_i^{r-1}\right)^T \sum_{r=1}^{\Delta_i[\ki]} A_i^{r-1} w_i[\ki-r]\right]\\
		&\stackrel{\mathclap{\normalfont\mbox{(1)}}}{=} \quad \E\Big[\sum_{r=1}^{\Delta_i[\ki]} (w_i[\ki-r])^T (A_i^{r-1})^T A_i^{r-1} w_i[\ki-r]\Big] \\
		&\stackrel{\mathclap{\normalfont\mbox{(2)}}}{=} \quad \sum_{r=1}^{\Delta_i[\ki]} \text{tr}((A_i^{r-1})^T A_i^{r-1} W_i)\\
		 &= \quad \sum_{r=0}^{\Delta_i[\ki]-1} \text{tr}((A_i^{r})^T A_i^{r} W_i),
	\end{align*}\\ \qedhere \\
	where $W_i = \E\Big[w_i[\ki-r] (w_i[\ki-r])^T\Big]$ is the noise covariance matrix. In step (1) it was used that noise vectors are i.i.d. and hence uncorrelated and step (2) holds because expectation of a quadratic norm of a random vector $x$ with covariance matrix $C$ is $\E\left[x^TAx\right] = \left(\E[x]\right)^T A~\E[x] + \tr(AC)$.
\end{proof}
\end{document}